\documentclass[12pt, draftclsnofoot, onecolumn]{IEEEtran}
\usepackage{etex}
\let\tnote\relax
\usepackage{algpseudocode}
\usepackage{algorithm}
\usepackage{gensymb}
\usepackage{amsthm}
\usepackage{threeparttable}

\usepackage{multicol}
\usepackage{empheq}

\newcommand{\Erf}[1]{{\mathbb{E}_{\mathbf{H}_\text{R}}}\left[{#1}\right]}
\newcommand{\Emm}[1]{{\mathbb{E}_{\mathbf{H}_\text{m}}}\left[{#1}\right]}

\newcommand{\vc}[1]{{\mathbf{#1}}}
\newcommand{\comment}[1]{}

\usepackage{mathtools}

\newcommand{\dsp}{\displaystyle}

\usepackage{bm}
\usepackage{amssymb}

\usepackage{varwidth}
\DeclareMathOperator*{\argmax}{arg\,max}
\usepackage[noadjust]{cite}

\usepackage{enumerate}
\usepackage{amssymb}
\usepackage{multirow}
\usepackage{dsfont}
\usepackage{citesort}
\usepackage{array}

\newtheorem{lemma}{Lemma}
\newtheorem{theorem}{Theorem}

\usepackage{listings}
\usepackage[utf8x]{inputenc} 
%\newcolumntype{L}[1]{>{\raggedright\let\newline\\\arraybackslash\hspace{0pt}}m{#1}}
%\newcolumntype{C}[1]{>{\centering\let\newline\\\arraybackslash\hspace{0pt}}m{#1}}
%\newcolumntype{R}[1]{>{\raggedleft\let\newline\\\arraybackslash\hspace{0pt}}m{#1}}
\usepackage{graphicx}
\usepackage{float}
\usepackage{epstopdf}
\usepackage{bbm}
\usepackage{pbox}
\usepackage{color}
\usepackage[section,subsection,subsubsection]{extraplaceins}
\usepackage{multirow}
\usepackage{subfigure}
\usepackage{booktabs}
\newcommand{\ra}[1]{\renewcommand{\arraystretch}{#1}}
\newcommand{\RNum}[1]{\uppercase\expandafter{\romannumeral #1\relax}}
\newcommand{\suchthat}{\;\ifnum\currentgrouptype=16 \middle\fi|\;}
%\linespread{2}

\usepackage{tikz}
\usetikzlibrary{automata,arrows,positioning,calc}

\usetikzlibrary{arrows}
\tikzstyle{int}=[draw, fill=blue!20, minimum size=2em]
\tikzstyle{init} = [pin edge={to-,thin,black}]

\usetikzlibrary{positioning,chains,fit,shapes,calc, arrows, shadows,trees}
\usepackage{calc}
\usepackage{multicol}
\usepackage{lipsum}
\pagestyle{empty}
\usetikzlibrary{decorations.markings, arrows, automata}
\tikzstyle{vertex}=[circle, draw, inner sep=0pt, minimum size=5pt]
\tikzset{symbol/.style={rectangle, draw, very thick,
minimum size=10mm, rounded corners=1mm}}
\tikzset{symbol2/.style={rectangle , draw,  thick,
minimum size=35mm, rounded corners=1mm}}

\tikzstyle{block} = [draw, fill=white, rectangle, 
    minimum height=3em, minimum width=6em]
\tikzstyle{sum} = [draw, fill=gray, circle, node distance=1cm]
\tikzstyle{input} = [coordinate]
\tikzstyle{output} = [coordinate]
\tikzstyle{pinstyle} = [pin edge={to-,thick,black}]
\usepackage{mathtools}

\begin{document}
\title{\centering \LARGE{Energy-Efficient Power and Bandwidth Allocation in an Integrated Sub-6 GHz -- Millimeter~Wave System }}
{\author{\IEEEauthorblockN{Morteza Hashemi\IEEEauthorrefmark{1}, C. Emre Koksal\IEEEauthorrefmark{1}, and Ness B. Shroff \IEEEauthorrefmark{1}\IEEEauthorrefmark{2}}\\\IEEEauthorblockA{\IEEEauthorrefmark{1}  Department of Electrical and Computer Engineering, The Ohio State University} \\
\IEEEauthorblockA{\IEEEauthorrefmark{2}Department of Computer Science and Engineering, The Ohio State University}}}
\maketitle
\thispagestyle{plain}
\pagestyle{plain}
\begin{abstract}
In mobile millimeter wave (mmWave) systems, energy is a scarce resource due to the large losses in the channel and high energy usage by analog-to-digital converters (ADC), which scales with bandwidth. In this paper, we consider a communication architecture that integrates the sub-6 GHz and mmWave technologies in $5$G cellular systems. In order to mitigate the energy scarcity in mmWave systems, we investigate the rate-optimal and energy-efficient physical layer resource allocation jointly across the sub-6 GHz and mmWave interfaces. First, we formulate an optimization problem in which the objective is to maximize the achievable sum rate under power constraints at the transmitter and receiver. Our formulation explicitly takes into account the energy consumption in integrated-circuit components, and  assigns the optimal power and bandwidth across the interfaces. We consider the settings with no channel state information and partial channel state information at the transmitter and under high and low SNR scenarios. Second, we investigate the energy efficiency (EE) defined as the ratio between the amount of data transmitted and the corresponding incurred cost in terms of power. We use fractional programming and Dinkelbach's algorithm to solve the EE optimization problem. \emph{Our results prove that despite the availability of huge bandwidths at the mmWave interface, it may be optimal (in terms of achievable sum rate and energy efficiency) to utilize it partially.} Moreover, depending on the sub-6 GHz and mmWave channel conditions and total power budget, it may be optimal to activate only one of the interfaces. 
%Our analytical results confirm the experimental observations in that allocating full bandwidth to both RF and mmWave interfaces may not be optimal in terms of the achievable  sum rate. 
\end{abstract}

\section{Introduction}
The demand for wireless spectrum is projected to continue growing well into the future, and
will only worsen the currently felt spectrum crunch. For instance, the annual data traffic generated by mobile devices is expected to surpass $130$ exabits by 2020 \cite{khan2011mmwave}. To address the problem of spectrum scarcity for cellular communications, it is envisioned that in $5$G cellular systems certain portions of the mmWave band will be used, spanning the spectrum between $30$ GHz to $300$ GHz \cite{rappaport2013millimeter}. However, before mmWave communications can become a reality, it faces significant challenges such as high propagation losses. The propagation loss at the mmWave band is much higher than the sub-6 GHz frequencies due to a variety of factors including atmospheric absorption, basic Friis transmission-effect, and low penetration.  For instance, materials
such as brick can attenuate mmWave signals by as much as $40$ to $80$ dB \cite{zhao201328} and the
human body itself can result in a $20$ to $35$ dB loss \cite{lu2012modeling}.

In order to compensate for high propagation losses, large antenna arrays with high directivity are needed.  In fact, one of the main drivers behind the emergence of mmWave mobile communications
is the emergence of large antenna arrays that can be deployed in relatively small chip areas. In this case, the mean end-to-end channel
gain is amplified by the product of the gains of the transmitter and receiver antennas. These large antenna-arrays, however, cause several other issues such as high energy consumption, mainly because of the analog-to-digital converters (ADCs) and power
amplifiers. For instance, 
consumption in  ADCs is substantial and it can be written as $P^{\text{(ADC)}} = c_\text{ox} W 2^{r_{\text{ADC}}}$, where $W$ is the bandwidth of the mmWave signal, $r_{\text{ADC}}$ is the quantization rate in bits/sample, and constant $c_\text{ox}$ depends on the gate-oxide capacitance of the converter. At a sampling rate of $1.6$ Gsamples/sec, an $8$-bit quantizer consumes $\approx 250$mW of power. During active transmissions, this would constitute up to $50$\% of the overall power consumed for a typical smart phone. 
Table \ref{tab:power-consumpiton} compares the representative values for power consumption in mmWave against that in sub-6 GHz \cite{orhan2015low,phan2010reconfigurable,liang20070}. 
From this table and the fact that ADC power consumption increases proportionally with bandwidth, we conclude that   \emph{the large bandwidths afforded by mmWave channels present an issue for the components due to the need for a proportionally high power, whereas the achievable data rate increases only logarithmically as a function of the bandwidth. } 

\begin{table}[t]\centering
\renewcommand{\arraystretch}{1.3}
\ra{1.3}
\begin{tabular}{@{}p{4.5cm}cc@{}}\toprule
\textbf{Component} &  \textbf{mmWave Technology} & \textbf{RF Technology}  \\ \toprule
 Analog to Digital Converter (ADC)  &   250 mW & 0.14 mW \\ \midrule
Low Noise Amplifier (LNA) & 39 mW & 3-10 mW \\ \midrule

Frequency Mixer  &  16.8 mW & 0.5-8 mW \\ \midrule
%Voltage Control Oscillator (VCO)  & TBA  & TBA \\ \midrule
Phase Shifter  & 19.8 mW & -- \\ 
 \bottomrule
\end{tabular}
\caption{Representative values for different component power consumption in the RF and mmWave technologies}
  \label{tab:power-consumpiton}
\end{table}
%In fact, as will be discussed in the sequel, our results demonstrate that it is often the case that the achievable rate per unit power is a non-increasing function beyond a threshold.

In addition to power consumption, in designing a communication system, one of the main objectives is to maximize the achievable rate (bits/sec). However, there is a law of diminishing returns, when it comes to the achievable rate, with increasing bandwidth. Indeed, for a wideband coherent communications system, the rate of increase in achievable rate varies as $\frac{\text{SNR}}{W}$ as a function of the bandwidth $W$. Therefore, it is often the case that the achievable rate per unit power is a non-increasing function of available bandwidth beyond a threshold. To illustrate this point,  we calculate the achieved bits/sec/watt for mmWave and sub-6 GHz interfaces in Fig.~\ref{fig:shannon_with_components}, incorporating the consumption by the components. In order to plot this figure, we have used a SISO model and typical values for consumption provided in~\cite{rappaport2011state,phan2010reconfigurable,liang20070} for an 8-bit quantizer and power spectral density of white noise taken to be the product of the temperature and the Boltzman constant. The overall transmission power is taken to be $10$ dB higher for sub-6 GHz and the loss in the channel is also taken into account. Note that the span of the bandwidth values is taken to be between $10$ KHz to $10$ MHz for sub-6 GHz and between $0.7$ MHz to $7$ GHz for mmWave. 

\begin{figure}[t]
\centerline{\includegraphics[height=2.5in]{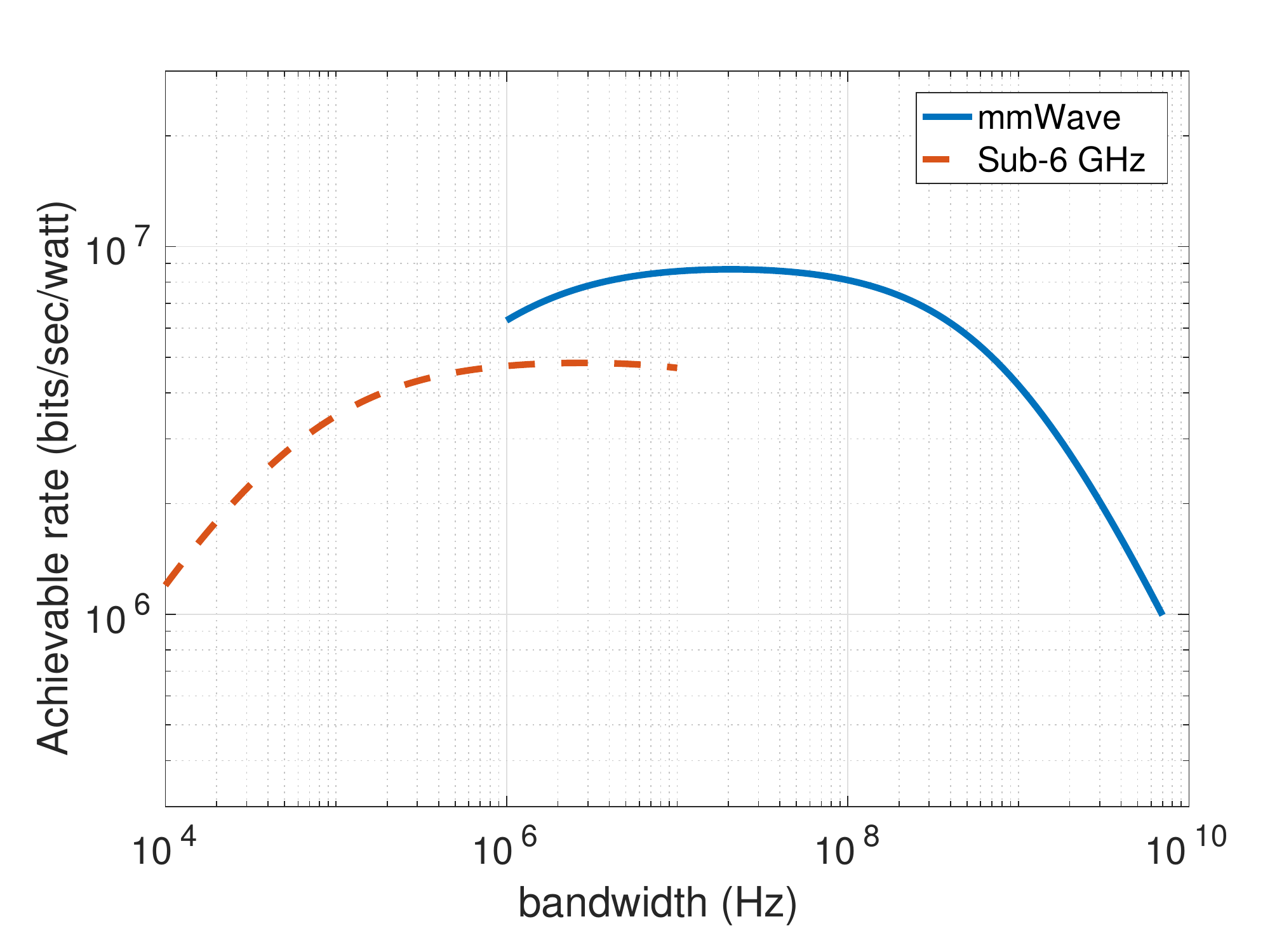}}
\vspace{-0.2in}
\caption{\small{Achievable rate per unit power with the component energy consumption taken into account.}} 
\label{fig:shannon_with_components}
\end{figure}

We have some interesting trends here. Firstly, the  achievable rate per unit bandwidth is not a monotonic function: for mmWave, it tends to decrease for large bandwidth values due to the increased consumption by the ADCs. The amount of increase in rate decreases inversely with the bandwidth, while the ADC power consumption increases linearly with bandwidth. This leads to the reduction in rate per unit power in the wideband regime. On the contrary, in a large band of values, sub-6 GHz interface becomes increasingly energy efficient as the bandwidth increases, due to the relatively low consumption in ADCs and other components. 
Thus, the mmWave interface should be utilized for data communication up to a certain bandwidth (above $\sim 800$MHz in Fig.~\ref{fig:shannon_with_components}). From this example, we conclude that \emph{even though a large bandwidth is available in mmWave band, it may be more energy efficient to utilize only a part of it.} Beyond that point, sub-6~GHz starts to become more energy-efficient per bit transmitted  due to the relatively low consumption in ADCs.

\subsection{Our Contributions}
In order to fully exploit the abundant yet intermittent mmWave bandwidth, we consider an integrated architecture in which the sub-6 GHz and mmWave interfaces coexist and act in cohesion \cite{hashemi2017hybrid}. This paper is aimed to optimally allocate the power and bandwidth jointly across the interfaces. To this end, first we formulate an optimization problem to maximize the achievable sum rate under power constraints at the transmitter and receiver. We investigate the solution space under various conditions depending on the availability of channel state information at the transmitter. Next, we consider the energy efficiency optimization problem, and demonstrate that this problem, in fact, can be reduced to the sum rate maximization. The important point is that our formulation explicitly takes into account the energy consumption in integrated-circuit components. Our results show that despite the availability of huge mmWave bandwidth, it may not be the most energy efficient to utilize it fully at all times. In particular, we prove that:
\begin{enumerate}
\item The ratio of optimal power allocated to the bandwidth utilized for each interface is proportional to the channel state of that interface. Therefore, whenever the channel condition of one of the interfaces degrades, the ratio of power to bandwidth for that interface should decrease as well.

\item As the ADC components become less energy efficient (i.e., a higher power consumption), it is optimal to decrease the utilized mmWave bandwidth. 

\item As the total power budget increases, it is optimal to expand the span of the utilized mmWave bandwidth.  

\item Since the sub-6 GHz bandwidth is much smaller than the mmWave bandwidth by several orders of magnitude, component consumption in sub-6 GHz becomes negligible compared with mmWave. Hence, it is almost optimal to fully utilize the sub-6 GHz bandwidth. 

\item When the system operates at low SNR regime (i.e., low power budget), the input power is allocated to the interface that has a better channel condition. 
\end{enumerate}

In summary, the main contributions of this work are as follows: (i) we provision an integrated sub-6 GHz/mmWave transceiver model, and formulate two optimization problems to  jointly allocate the bandwidth and power across the interfaces in order to maximize the achievable sum rate and energy efficiency; (ii) we investigate the optimal solution space under various conditions such as when there is no channel state information  or when there is only partial channel state information available at the transmitter; and (ii) we analytically confirm the experimental observations in that it may not be optimal (in terms of achievable sum rate) to allocate full bandwidth to both interfaces.

\subsection{Related Work}
We classify existing and related work across the following thrusts: (i) energy efficient mmWave architectures, and (ii) integrated mmWave systems. 

\textbf{(I) Energy efficient mmWave architectures:} Energy efficient transceiver architectures such as the use of low resolution ADCs and hybrid analog/digital combining has attracted significant interest. The limits of communications over additive white Gaussian channel
with low resolution (1-3 bits) ADCs at the receiver is studied
in \cite{singh2009limits}. The bounds on the capacity of the MIMO channel with 1-bit ADC at high and low SNR regimes are derived in \cite{mo2014high} and \cite{mezghani2007ultra},
respectively. The joint optimization
of ADC resolution with the number of antennas in a MIMO channel is studied in \cite{bai2013optimization}. While \cite{el2014spatially} provides efficient hybrid precoding and combining algorithms for sparse mmWave channels that performs close to full digital solution, \cite{alkhateeb2014mimo} combines efficient channel estimation
with the hybrid precoding and combining algorithm in \cite{el2014spatially}. 

Although there has been extensive amount of work to optimize the mmWave receivers architecture (e.g., in terms of ADCs), the effect of bandwidth on the mmWave performance has not been fully investigated.  To the best of our knowledge, only  the authors in \cite{orhan2015low} have studied  the effect of bandwidth on the performance of \emph{standalone mmWave systems}. Compared with \cite{orhan2015low}, we consider an \emph{integrated sub-6 GHz/mmWave architecture} in which the transmitter and receiver are power constrained. In this case,  an optimal power and bandwidth allocation is derived to maximize the achievable sum rate and energy efficiency.  In addition, the authors in \cite{orhan2015low} consider the number of ADC quantization bits as an optimization parameter, while we assume that the ADC architecture is fixed and the transmit power and bandwidth are optimized  for a MIMO architecture. We derive the closed form expressions for the optimal power and bandwidth allocation across the sub-6 GHz and mmWave interfaces. To the best of our knowledge, there is no previous work that investigates the joint effect of bandwidth and transmission power on the performance of integrated sub-6 GHz/mmWave systems.

\textbf{(II) Integrated sub-6 GHz/mmWave systems:}
 Beyond the classical mmWave communications and beamforming methods \cite{rappaport2013millimeter,collonge2004influence,roh2014millimeter,adhikary2014joint}, recently, there have been proposals  on leveraging out-of-band information in order to enhance the mmWave performance. The authors in \cite{aliestimating} propose a transform method to translate the spatial correlation matrix at the sub-6 GHz band into the correlation matrix of the mmWave channel. The authors in \cite{nitsche2015steering} consider the $60$ GHz indoor WiFi network, and investigate the correlation between the estimated angle-of-arrival (AoA) at the sub-6 GHz band with the mmWave AoA in order to reduce the beam-steering
overhead. The authors in \cite{ali2017millimeter} propose a compressed beam selection method which is based on out-of-band spatial information obtained at sub-6 GHz band. Our work is distinguished from the above cited works as we investigate the optimal physical layer resource allocation across the sub-6 GHz and mmWave interfaces.  In our proposed hybrid sub-6 GHz/mmWave architecture, both interfaces can be simultaneously used for data transfer.  In \cite{hashemi2017hybrid},  we investigated the problem of optimal load devision and scheduling across the sub-6 GHz and mmWave interfaces.

\subsection{Notations} We use the following notation throughout the paper. Bold uppercase and lowercase letters are used for matrices and vectors, respectively, while non-bold letters are used for scalers.  In addition, $(.)^H$ denotes the conjugate transpose; $\text{tr}(.)$ denotes the matrix trace operator; and $\mathds{E}[.]$ denotes the expectation operator. The sub-6 GHz and mmWave variables are denoted by $(.)_\text{sub-6}$ and $(.)_\text{m}$, respectively.

\section{System Model and Problem Formulation}
%In this section, we present the system and channel model, followed by the problem formulation. 
\subsection{System Model}
Figure \ref{fig:system} illustrates the  components of our proposed architecture that integrates the sub-6 GHz and mmWave interfaces. In \cite{hashemi2017hybrid}, we demonstrated that sub-6 GHz spatial information  enables the mmWave beamforming fully in the analog domain without incurring large delay overhead, and thus it can remedy the high energy consumption issue by reducing the number of ADC components that are otherwise needed for digital beamforming. 
 \begin{figure*}[t!]
\begin{center}
\includegraphics[scale=.33, trim = 0cm 3.7cm 0cm 3.2cm, clip]{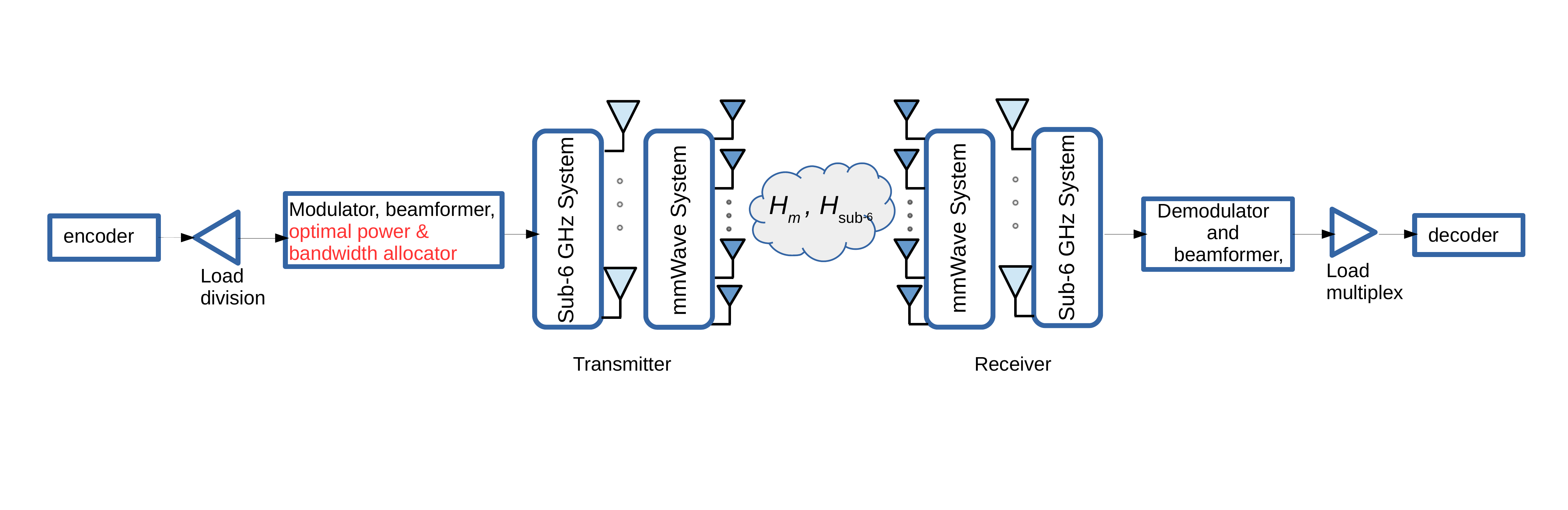}
\caption{Integrated sub-6 GHz and mmWave system with optimal power and bandwidth allocation across the interfaces. The goal is to maximize the achievable sum rate and energy efficiency. }
\label{fig:system}
\end{center}
\end{figure*}
In this paper, in addition to beamforming, we leverage the sub-6 GHz interface for communications and data transfer, and assume that the sub-6 GHz/mmWave transmitter and receiver are power constrained. The power constraint at the transmitter dictates the optimal power allocation across the interfaces, while the receiver power constraint determines the optimal bandwidth allocation. Without loss of generality, we assume that the transmitter and receiver constraints are jointly considered as a single constraint, and the problem is expressed in joint power and bandwidth allocation across the interfaces with the total power budget $P_{\text{max}}$. In this case, the results will qualitatively be parallel to the scenario where we impose constraints on the consumed power at the transmitter and at the receiver separately.

%\textbf{Bandwidth-limited system:} In the proposed system, we assume that the RF interface can utilize the total bandwidth of $W_R^{\max}$ and the mmWave interface is assigned with the total bandwidth of $W_m^{\max}$. The goal is to obtain the optimal bandwidth that each interface should use in order to maximize the sum rate, while the power constraint is satisfied. 

\begin{figure}[t]
\centering
\includegraphics[scale=.4, trim = 0cm 1.5cm 0cm 2.8cm, clip]{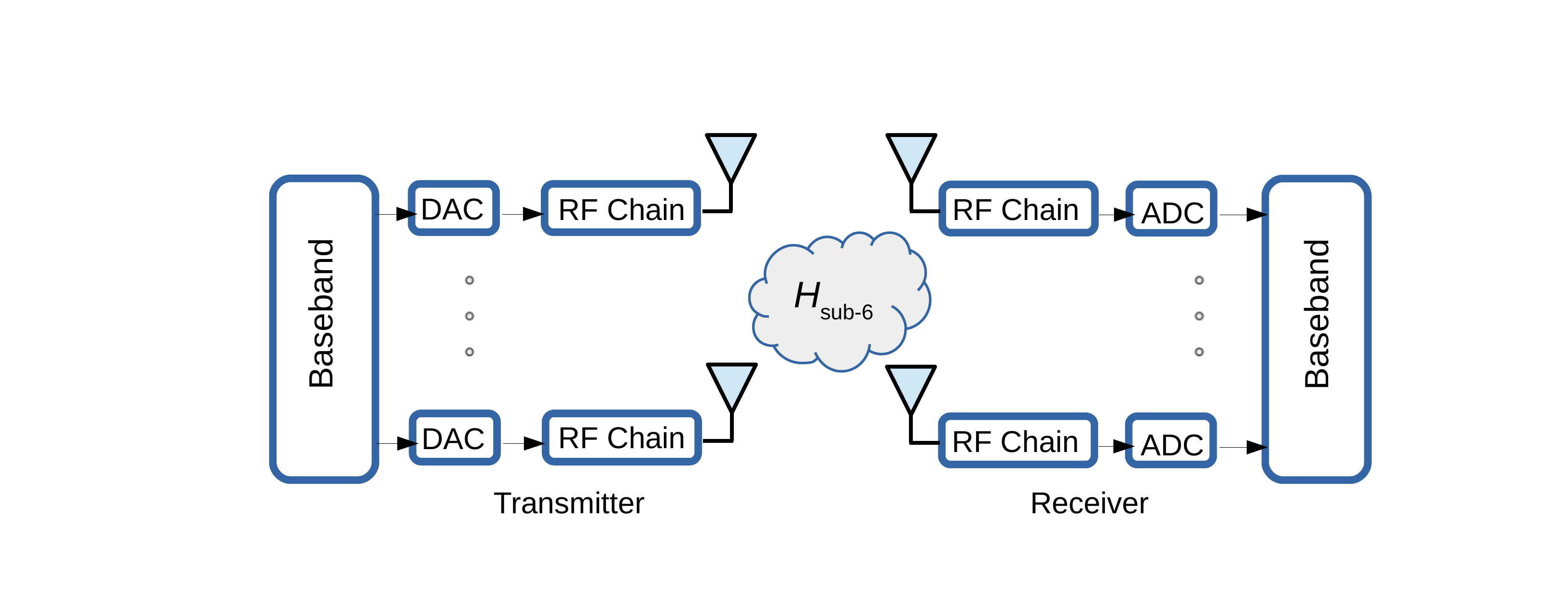}
\caption{Sub-6 GHz system model based on digital beamforming}
\label{fig:RF-model}
\end{figure}

\subsection{Sub-6 GHz System and Channel Model}
The sub-6 GHz system model is shown in Fig. \ref{fig:RF-model} where we use digital beamforming. As a result, the received signal at the receiver can be written as: 
\begin{equation}
\label{eq:rec_signals}
\mathbf{y}_{\text{sub-6}}=\mathbf{H}_{\text{sub-6}} \cdot
\mathbf{x}_{\text{sub-6}}+\mathbf{n}_{\text{sub-6}},
\end{equation}
where $\mathbf{H}_{\text{sub-6}}$ is the sub-6 GHz channel matrix and $\mathbf{x}_{\text{sub-6}}$ is the transmitted signal vector in sub-6 GHz. Entries of circularly symmetric white Gaussian noise $\mathbf{n}_{\text{sub-6}}$ are normalized to have unit variance. 
\begin{figure}[t]
\centering
\includegraphics[scale=.4, trim = 0cm 1.2cm 0cm 0.5cm, clip]{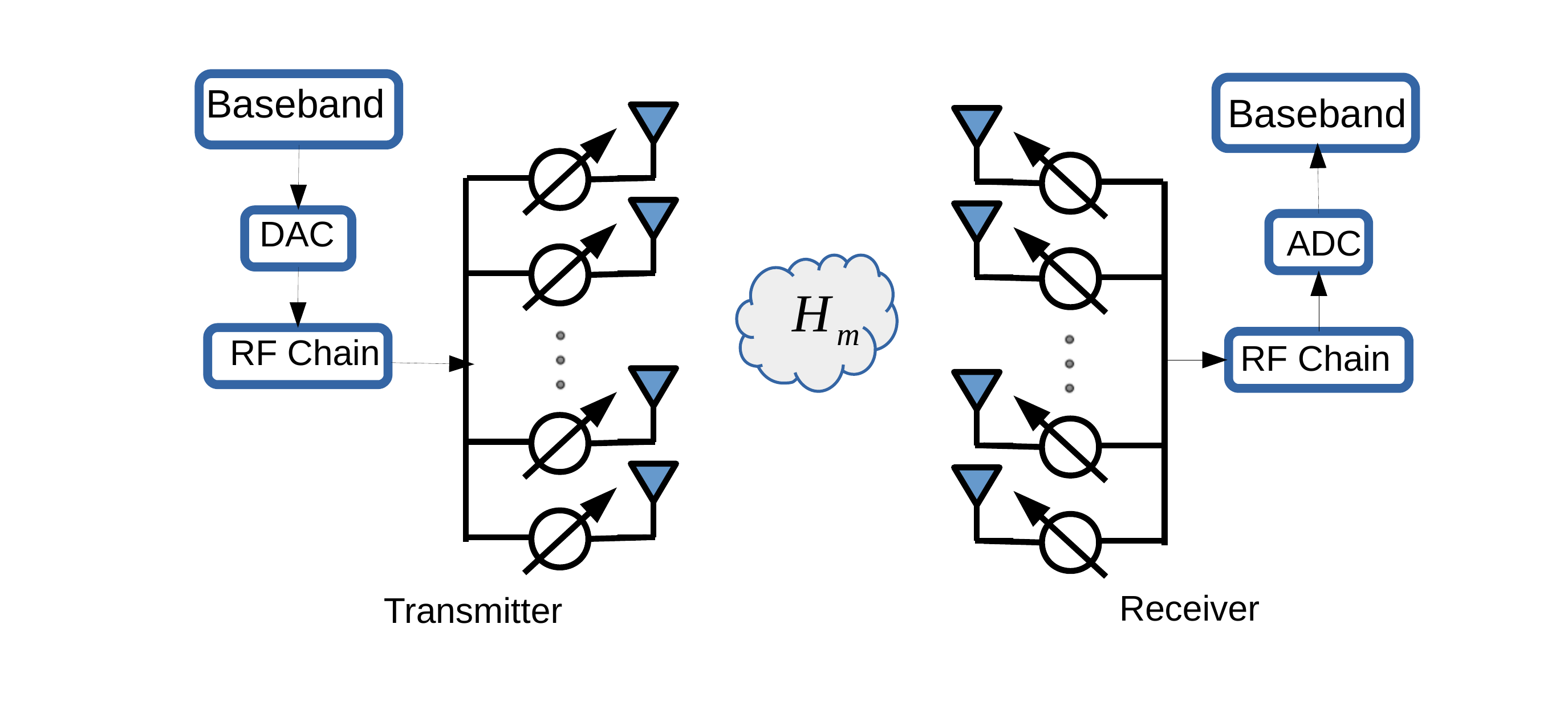}
\caption{mmWave system model with analog beamforming}
\label{fig:mmWave-model}
\end{figure}
In the proposed system, we assume that the sub-6 GHz interface can utilize the total bandwidth of $W_\text{sub-6}^{\max}$. Moreover, the transmission power of the sub-6 GHz interface is denoted by $P_\text{sub-6}=\text{tr}({\mathbf{K}_{\textbf{xx}}})$ in which $\mathbf{K}_{\textbf{xx}}$ is the covariance matrix of signal $\mathbf{x}_\text{sub-6}$.  We assume that the sub-6 GHz system includes $n_t$ transmit and $n_r$ receive antennas.

\subsection{MmWave System and Channel Model} 
 The mmWave system model is shown in Fig. \ref{fig:mmWave-model}. Unlike sub-6 GHz, we use  analog combining for mmWave via a single ADC. Consequently, the signal at the input of the decoder is a scalar, identical to a weighted combination of signal $x_{\text{m}}$ across all antennas. Thus, the received signal at the mmWave receiver can be written as: 
\begin{equation}
\label{eq:rec_signals}
 y_{\text{m}}=\vc{w}_r^H \mathbf{H}_{\text{m}} \vc{w}_t \cdot x_{\text{m}}+\tilde{n}_{\text{m}},
\end{equation}
where $\vc{w}_r$ and $\vc{w}_t$ are the receive and transmit beamforming vectors.  
The $\tilde{n}_{\text{m}}$ term denotes the effective Gaussian noise and is normalized to have unit variance.  The mmWave interface is assigned with the total bandwidth of $W_\text{m}^{\max}$, and mmWave transmit power is denoted by $P_\text{m}$.

\subsection{Problem Formulation}
In mmWave communications, due to the large consumption of the components and the losses in the channel, energy is
a scarce resource, and thus we assume that $P_{\max}$ is the maximum power available for \emph{data transmission} and \emph{component consumption} (i.e., ADC components) across the sub-6 GHz and mmWave interfaces. Moreover, the ADC power consumption scales proportionally with bandwidth, i.e., $P^{\text{(ADC)}} = a W$, where $W$ denotes the bandwidth and $a$ is a constant for a given ADC with fixed quantization rate (i.e., $a = c_{ox} 2^{r_{ADC}}$). As mentioned earlier and from Fig. \ref{fig:shannon_with_components}, it may not be the most energy efficient to utilize the available bandwidth fully at all times, and thus bandwidth allocated to each interface needs to be optimized in addition to the power allocated to each interface.  Hence, we define Problem 1 as follows in order to maximize the achievable sum rate across the interfaces with a given constraint on the joint transmitter and receiver power consumption.  
\begin{itemize}
\item \textbf{Problem 1} \textbf{(Power-Constrained Sum Rate Maximization):} \emph{We consider the problem of maximizing the sum rate of the sub-6 GHz and mmWave interfaces subject to a power constraint, i.e.,:}
%\ \  \mathcal{R}(W_\text{m}, W_\text{R}, P_\text{m}, P_\text{R})
\begin{subequations}
\small
\begin{align}
& \dsp{\max_{\substack{W_{\text{m}},W_{\text{sub-6}} \\ P_{\text{m}}, P_{\text{sub-6}}}}}  \dsp{\Erf{ W_{\text{sub-6}} \log \det \left( \mathbf{I}+ \frac{\mathbf{H}_{\text{sub-6}}^H \mathbf{K}_{\mathbf{xx}} \mathbf{H}_{\text{sub-6}}}{W_{\text{sub-6}}}  \right)}}  \dsp{+\Emm{W_{\text{m}} \log \left( 1 + \frac{P_{\text{m}} \left|\vc{w}_r^H \mathbf{H}_{\text{m}} \vc{w}_t\right|^2}{W_{\text{m}}} \right)}} \label{problem-1} \\
 & \ \text{subject to:}  \ \ \  P_{\text{sub-6}} \large{\mathds{1}}_{W_\text{sub-6} > 0} + n_r aW_{\text{sub-6}}+P_{\text{m}} \large{\mathds{1}}_{W_\text{m} > 0}+ a W_{\text{m}}  \leq P_{\max},  \label{power-constraint}\\
& \hspace{2.2cm} 0 \leq W_\text{sub-6} \leq W_\text{sub-6}^{\max}; \  0 \leq W_{\text{m}} \leq W_\text{m}^{\max}; \label{bandwidth-constraint}\\ & \hspace{2.2cm} 0 \leq P_\text{m}, P_\text{sub-6}.  
\end{align}
\end{subequations} 
\normalsize

%where: \begin{align}\mathcal{R}(W_\text{m}, W_\text{R}, P_\text{m}, P_\text{R}) = \dsp{\Erf{ W_{\text{R}} \log \det \left( \mathbf{I}+ \frac{\mathbf{H}_{\text{R}}^H \mathbf{K}_{\mathbf{xx}} \mathbf{H}_{\text{R}}}{W_{\text{R}}}  \right)}}  \nonumber \\ \dsp{+\Emm{W_{\text{m}} \log \left( 1 + \frac{P_{\text{m}} \left|\vc{w}_r^H \mathbf{H}_{\text{m}} \vc{w}_t\right|^2}{W_{\text{m}}} \right)}}.
%\end{align}
%We define:
%\begin{equation}
%\mathcal{I} (W_\text{m}, W_\text{R}, P_\text{m}, P_\text{R}) = \mathcal{R}\left(W_\text{m}, W_\text{R}, P_\text{m}, P_\text{R} | \mathbf{H}_\text{m}, \mathbf{H}_\text{R}\right),
%\label{}
%\end{equation}
In Section \ref{sec:problem1-solution}, we investigate the optimal solution of Problem 1.
\end{itemize}

\begin{itemize}
\item \textbf{Problem 2 (Energy Efficiency Maximization):}
\emph{The second problem that we explore is maximization of energy efficiency (sum rate per unit power expenditure), i.e.,:}
\begin{subequations}
\small
\begin{align}
& \dsp{\max_{\substack{W_{\text{m}},W_{\text{sub-6}} \\ P_{\text{m}}, P_{\text{sub-6}}}}}  \frac{\dsp{\Erf{W_{\text{sub-6}} \log \det \left( \mathbf{I}+ \frac{\mathbf{H}_{\text{sub-6}}^H \mathbf{K}_{\mathbf{xx}} \mathbf{H}_{\text{sub-6}}}{W_{\text{sub-6}}}\right)} + \Emm{W_{\text{m}} \log  \left( 1+ \frac{P_{\text{m}}}{W_{\text{m}}}\left|\vc{w}_r^H \mathbf{H}_{\text{m}} \vc{w}_t\right|^2 \right)}}}{P_{\text{sub-6}} \large{\mathds{1}}_{W_\text{sub-6} > 0} + n_r aW_{\text{sub-6}} + P_{\text{m}} \large{\mathds{1}}_{W_\text{m} > 0}+ a W_{\text{m}}}, \label{eq:problem2} \\
& \ \text{subject to:}  \ \ 0 \leq W_\text{sub-6} \leq W_\text{sub-6}^{\max}; \  0 \leq W_{\text{m}} \leq W_\text{m}^{\max};  \label{eq:problem2-bandwidth-constraint}\\ & \hspace{2cm} 0 \leq P_\text{m}, P_\text{sub-6}.  \label{eq:problem2-power-constraint}
 \end{align}
 \end{subequations}
 \normalsize
 In Section \ref{sec:problem2-solution}, we consider the solution space of Problem 2. 
\end{itemize}

\section{Power-Constrained Sum Rate Maximization}
\label{sec:problem1-solution}
 In this section, we investigate the optimal solution of Problem 1. To this end, we note that Problem 1 is that of the convex optimization since the objective function is concave and the constraint is linear. In addition, the objective function is concave and increasing in the variables $W_{\text{sub-6}}, P_\text{sub-6}, W_{\text{m}}$, and $P_\text{m}$. It is straightforward to show that the objective function is increasing in $P_\text{m}$ and $P_\text{sub-6}$. Moreover, by taking the derivative of the function $f(x) = x\log(1+\frac{1}{x})$, one can see that $f(x)$ is increasing in $x$, and thus \eqref{problem-1} is increasing in $W_\text{sub-6}$ and $W_\text{m}$ as well. Using the second order derivatives, we can show that the objective function is concave in $W_\text{sub-6}$ and $W_\text{m}$. 
 
 From Problem 1, we note that there is a tradeoff in bandwidth allocation: it is desirable to set the sub-6 GHz and mmWave bandwidth variables to $W_\text{sub-6}^{\max}$ and $W_\text{m}^{\max}$ in order to increase the objective value. However, due to the \eqref{power-constraint} constraint, high bandwidth reduces the \emph{transmission power} (due to the power-hungry ADC components), which in turn, reduces the objective value. Similarly, there is a tradeoff in allocating the  transmission power $P_\text{sub-6}$ and $P_\text{m}$.
% allocation such that allocating a big portion of the available power $P^{\max}$ for transmission decreases the objective value since the allocated bandwidth need to be reduced to satisfy the power constraint. This, in turn, will result in a lower objective value. Therefore, there is a tradeoff in terms of bandwidth  allocation to the RF and mmWave interfaces. 
In order to optimally balance this tradeoff, we solve the sum rate optimization problem using the convex optimization tools under two scenarios: (i) when no channel state information at the transmitter (CSIT) is available, and (ii) when partial CSIT is available.  
 
\subsection{No Channel State Information at the Transmitter}
When the channel matrix $\mathbf{H}_\text{sub-6}$ is random and there is no channel state information at the transmitter (CSIT) available, the optimal power allocation across the $n_t$ antenna elments of the sub-6 GHz interface (different than the optimal power allocation across the interfaces), is uniform \cite{tse2005fundamentals}. Therefore, the covariance matrix $\mathbf{K}_\mathbf{xx}$ can be written as:
\begin{equation}
\mathbf{K}_\mathbf{xx} = \frac{P_\text{sub-6}}{n_t} \mathbf{I}_{n_t}, 
\label{eq:optimal-covariance-no-csi}
\end{equation}
\noindent and thus, the first term in \eqref{problem-1} is simplified to:
\begin{equation}
\dsp{W_{\text{sub-6}} {\log \det \left( \mathbf{I}+ \frac{P_\text{sub-6}}{W_{\text{sub-6}} n_t} \mathbf{H}_{\text{sub-6}}^H \mathbf{H}_{\text{sub-6}} \right)}}.
\label{rf-capacity}
\end{equation}
We assume that $\lambda_1 \geq \lambda_2 \geq ... \geq \lambda_{n}$ denote the ordered singular values of the sub-6 GHz channel matrix $\mathbf{H}_{\text{sub-6}}$ where $n = \min(n_t, n_r)$. Therefore, from the determinant and singular value properties, we can rewrite \eqref{rf-capacity} as:
\begin{equation}
W_{\text{sub-6}}\sum_{i=1}^{n} \log \left(1+\frac{P_\text{sub-6}}{W_\text{sub-6} n_t} \lambda_i ^2 \right).
\label{eq:RF-simple-form}
\end{equation}
As a result, \eqref{problem-1} is simplified by replacing its first term with \eqref{eq:RF-simple-form}. Due to the fact that problem \eqref{problem-1} is convex, the
Karush–Kuhn–Tucker (KKT) conditions are necessary and sufficient for the optimality of the solution \cite{boyd2004convex}. In order to derive the KKT conditions, we form the following Lagrangian function:

\vspace{-.6cm}
\small{
\begin{align}
\mathcal{L} (W_\text{m}, W_\text{sub-6}, P_\text{m}, P_\text{sub-6}, \boldsymbol{\mu}) = W_{\text{sub-6}}\dsp{\sum_{i=1}^{n}  \log \left(1+\frac{P_\text{sub-6}}{W_\text{sub-6} n_t} \lambda_i ^2 \right) } 
\dsp{+{W_{\text{m}} \log \left( 1 + \frac{P_{\text{m}}}{W_{\text{m}}} \left|\vc{w}_r^H  \mathbf{H}_{\text{m}}   \vc{w}_t\right|^2 \right)}} \nonumber \\ 
+ \mu_0\bigg(P_\text{max} - \dsp{P_{\text{sub-6}} - n_r a W_{\text{sub-6}}-P_{\text{m}}- a W_{\text{m}}}\bigg) + \mu_1 \left(W_\text{sub-6}^{\max} - W_\text{sub-6}\right) + \mu_2 (W_\text{m}^{\max} - W_\text{m}) + \mu_3 P_\text{sub-6} + \mu_4 P_\text{m} & ,
\label{lagrange}
\end{align}}
\normalsize

\vspace{-.5cm}
\noindent for the Lagrange multiplier vector $\boldsymbol{\mu} = (\mu_0, ..., \mu_4)$. From the KKT stability conditions, we have:
%calculate the derivatives of the Lagrangian function with respect to $W_\text{R}$, $P_\text{R}$, $W_\text{m}$, and $P_\text{m}$, respectively.  
\begin{equation}
  \left\{
  \begin{array}{l l}
n \log \frac{P_\text{sub-6}}{W_\text{sub-6} n_\text{t}} + \sum_{i=1}^{n}  
\log \lambda_i^2 - \frac{P_\text{sub-6} \lambda_i ^2}{W_\text{sub-6} n_t + P_\text{sub-6} \lambda_i^2}  & = n_r a \mu_0 + \mu_1,\\ \vspace{.1cm}
    
W_\text{sub-6} \sum_{i=1}^{n} \frac{\lambda_i^2}{W_\text{sub-6} n_t + P_\text{sub-6} \lambda_i^2} & = \mu_0 - \mu_3, \\ \vspace{.1cm}  
  
  \log \frac{P_\text{m}}{W_\text{m}} + \log A - \frac{P_\text{m}A}{W_\text{m} + P_\text{m} A} &= a  \mu_0 + \mu_2, \vspace{.14cm} \\  
    
 \frac{W_\text{m} A }{W_\text{m} + P_\text{m} A}  & = \mu_0 - \mu_4.
%  \\ \vspace{.1cm}
%
%P_\text{sub-6} + a_\text{sub-6} W_\text{sub-6} + P_\text{m} + a_\text{m} W_\text{m} & = P_{\text{max}} 
  \end{array} \right.
  \label{kkt}
\end{equation}
For the sake of notations, we define $A~:= ~\left|\vc{w}_r^H \mathbf{H}_{\text{m}} \vc{w}_t\right|^2$. 
%The first and second equations are associated with the RF interface; the third and fourth constraints are related to the mmWave interface. 
From the KKT complimentary slackness conditions, we have:
\begin{equation}
  \left\{
  \begin{array}{l l}
  \mu_0\bigg(P_\text{max} - \dsp{P_{\text{sub-6}} - n_r aW_{\text{sub-6}}-P_{\text{m}}- a_{\text{m}}W_{\text{m}}}\bigg) & = 0, \vspace{.14cm} \\  

\mu_1 \left(W_\text{sub-6}^{\max} - W_\text{sub-6}\right)  & = 0, \vspace{.14cm} \\   

\mu_2 (W_\text{m}^{\max} - W_\text{m}) & = 0, \vspace{.14cm} \\ 

\mu_3 P_\text{sub-6} & = 0, \vspace{.14cm} \\ 
   
\mu_4 P_\text{m} & = 0. 
\end{array} \right.
  \label{kkt-slackness}
\end{equation}
 From the second and fourth equations in \eqref{kkt}, we conclude that $\mu_0 - \mu_3 > 0 \Rightarrow \mu_0 > \mu_3 \geq 0$, and thus $\mu_0 > 0$. However, from the first equation in \eqref{kkt-slackness}, we conclude that  $$P_\text{max} - \dsp{P_{\text{sub-6}} - n_r aW_{\text{sub-6}}-P_{\text{m}}- a_{\text{m}}W_{\text{m}}} = 0,$$ and thus the power constraint is satisfied with equality. 
% In the following lemma, we also intuitively argue that the power constraint is active. 
%\blue{Need to modify the energy consumption of RF due to the fact that it has more than one ADC, and thus it needs to be multiplied by the number of receiver antenna, or equivalently use $a_R$.}
%\begin{lemma}
%Assume that the optimal resource allocation solution is denoted by $S^* = \left(W_\text{sub-6}^*, P_\text{R}^*, W_\text{m}^*, P_\text{m}^* \right)$. In this case, at the optimal solution $S^*$, the constraint \eqref{power-constraint} is satisfied with equality.
%\end{lemma}
%
%\begin{proof}
Intuitively, we can also argue that the objective function in Problem 1 is an increasing function of the bandwidth parameters $W_\text{sub-6}$ and $W_\text{m}$, and thus the objective function can be  improved by increasing the value of $W_\text{sub-6}$ and $W_\text{m}$. In the case that constraint \eqref{bandwidth-constraint} becomes passive, the objective function can be improved by increasing the transmit power $P_\text{sub-6}$ and $P_\text{m}$ since the objective function is increasing in terms of these variables as well. Therefore, the power constraint \eqref{power-constraint} is always satisfied with equality at the optimal solution. Next, in order to characterize the optimal solution we note that since the transmit power $P_\text{sub-6}$ and $P_\text{m}$ cannot be zero, $\mu_3 = \mu_4 =0$ should hold to satisfy the complimentary slackness conditions. For the Lagrange multipliers $\mu_1$ and $\mu_2$, we have the following cases.

\textbf{(I) Full Sub-6 GHz Bandwidth Allocation ($\boldsymbol{\mu_1 >0}$):} Depending on the channel conditions, if $\mu_1 > \mu_2$ holds, since the Lagrange multiplier $\mu_2$ is nonnegative, it results in $\mu_1>0$. Thus, from the complimentary slackness condition, we conclude that
$W_\text{sub-6}^* = W_\text{sub-6}^{\max}$. Moreover, the condition $\mu_1 > \mu_2$ implies that $n_r a \mu_0 + \mu_1 > a \mu_0 + \mu_2$ as well. As a result, from \eqref{kkt}, we can see that if the sub-6 GHz interface has a better channel condition than the mmWave channel, then the whole available bandwidth $W_\text{sub-6}^{\max}$ should be utilized.  This results in a system of equations from which the optimal value for $W_\text{m}^*, P_\text{m}^*,$ and $P_\text{sub-6}^*$ are calculated as follows: 
\begin{equation}
P_\text{sub-6}^* =   \max\left(0, \frac{n n_r a W_\text{sub-6}^\text{max}}{B}\right),
\label{eq:power-optimal}
\end{equation}
\begin{equation}
P_\text{m}^* =  \max\left(0, \frac{P_\text{max} - CW_\text{sub-6}^\text{max}}{B+1}\right),
\end{equation}
\begin{equation}
W_\text{m}^* =  \max\left(0, \frac{P_\text{max} - C W_\text{sub-6}^\text{max}}{a+\frac{a}{B}}\right),
\label{eq:BW-optimal}
\end{equation}
where:
$$
B = \omega \left(\log(a A) -1 \right) \ \text{and} \ \ C = n_r a + \frac{n n_r a}{B}, 
$$
in which $\omega(.)$ is the Wright omega function. 
%\red{[how much $W_\text{m}^*$ is less than its maximum?]}
Note that in order to make the calculation more tractable, we assume that the system operates at high SNR regime, and thus the following approximations hold:
$$
1+\frac{P_\text{sub-6}}{W_\text{sub-6} n_t} \lambda_i ^2  \approx  \frac{P_\text{sub-6}}{W_\text{sub-6} n_t} \lambda_i ^2,
\quad \text{and} \quad
1 + \frac{P_{\text{m}}}{W_{\text{m}}} A \approx \frac{P_{\text{m}}}{W_{\text{m}}} A. 
$$

Moreover, the KKT conditions in \eqref{kkt} result in the following equations:
\begin{equation}
 \frac{A }{1 + \frac{P_\text{m}}{W_\text{m}} A} = \sum_{i=1}^{n}  \frac{\frac{\lambda_i^2}{n_t}}{1 + \frac{P_\text{sub-6}}{W_\text{sub-6} n_t} \lambda_i^2};
\label{eq:kkt-result-1}
\end{equation}
\begin{equation}
\frac{\log \frac{P_\text{m}}{W_\text{m}} + \log A}{\frac{P_\text{m}}{W_\text{m}} + a} = \frac{\log \frac{P_\text{sub-6}}{W_\text{sub-6}} + \sum_{i=1}^{n}\log \lambda_i^2}{\frac{P_\text{sub-6}}{W_\text{sub-6}} + n_r a},
\label{eq:kkt-result-2}
\end{equation}
in which $A = ~\left|\vc{w}_r^H \mathbf{H}_{\text{m}} \vc{w}_t\right|^2$ captures the mmWave channel conditions. 
\emph{From \eqref{eq:kkt-result-1} and \eqref{eq:kkt-result-2}, we observe that whenever the channel condition of one of the interfaces degrades, the ratio of power to bandwidth for that interface should decrease as well. }

\begin{figure}[t!]
  \begin{center}
%    \subfigure[MmWave power vs. ADC consumption] {
%      \label{fig:mmWave_power_ADC}
%      \includegraphics[scale=.19, trim = 3cm 2.8cm 3cm 2.5cm, clip]{} 
%    } 
    \subfigure[MmWave bandwidth vs. ADC consumption] 
    {
      \label{fig:mmWave_bandwidh_ADC}
      \includegraphics[scale=.21, trim = 1cm 0cm 3cm 1cm, clip]{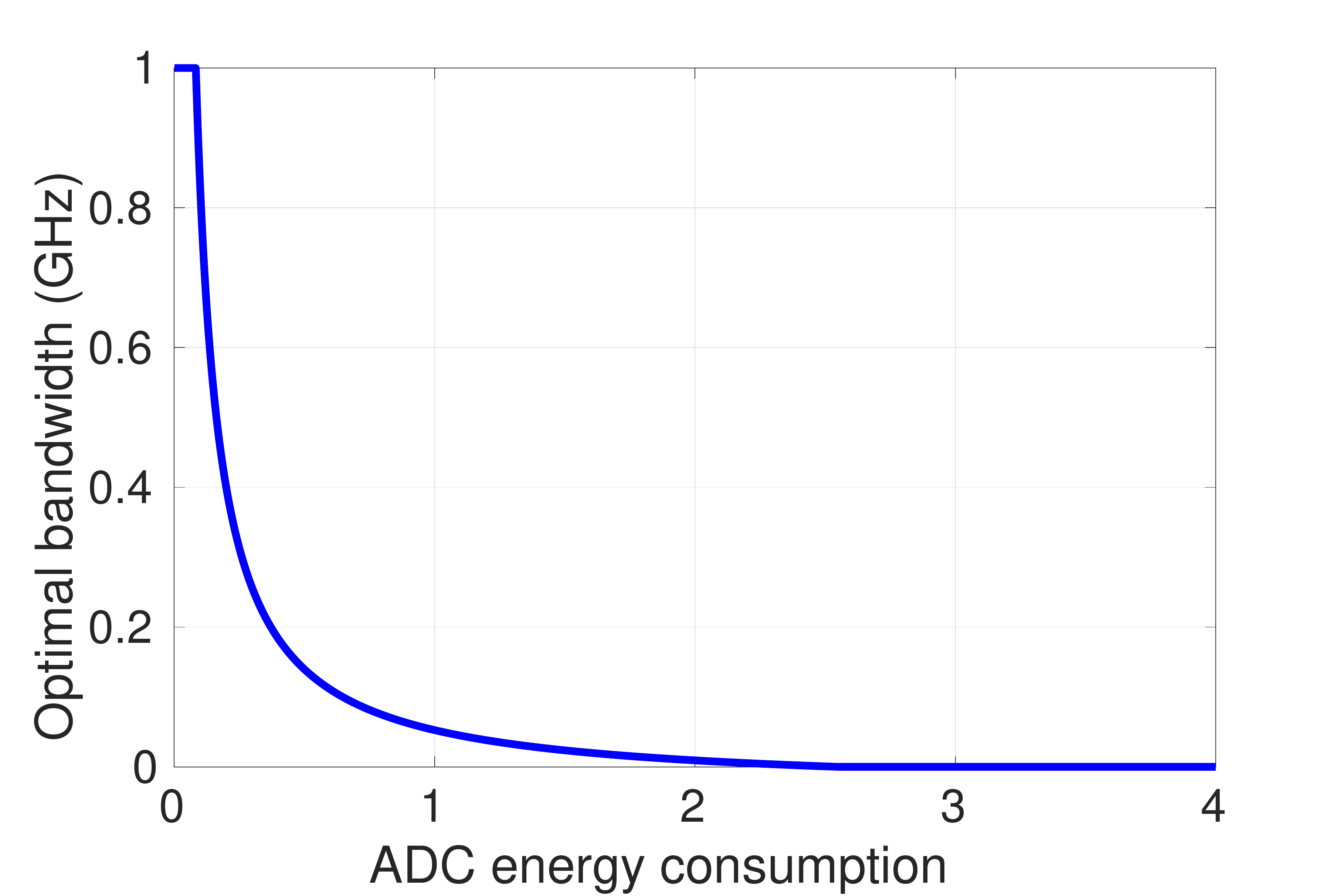}  
    }\hspace{1cm}
     \subfigure[Sum rate vs. ADC consumption] 
    {
      \label{fig:sum_rate_ADC}
      \includegraphics[scale=.21, trim = 0cm 0cm 3cm 1cm, clip]{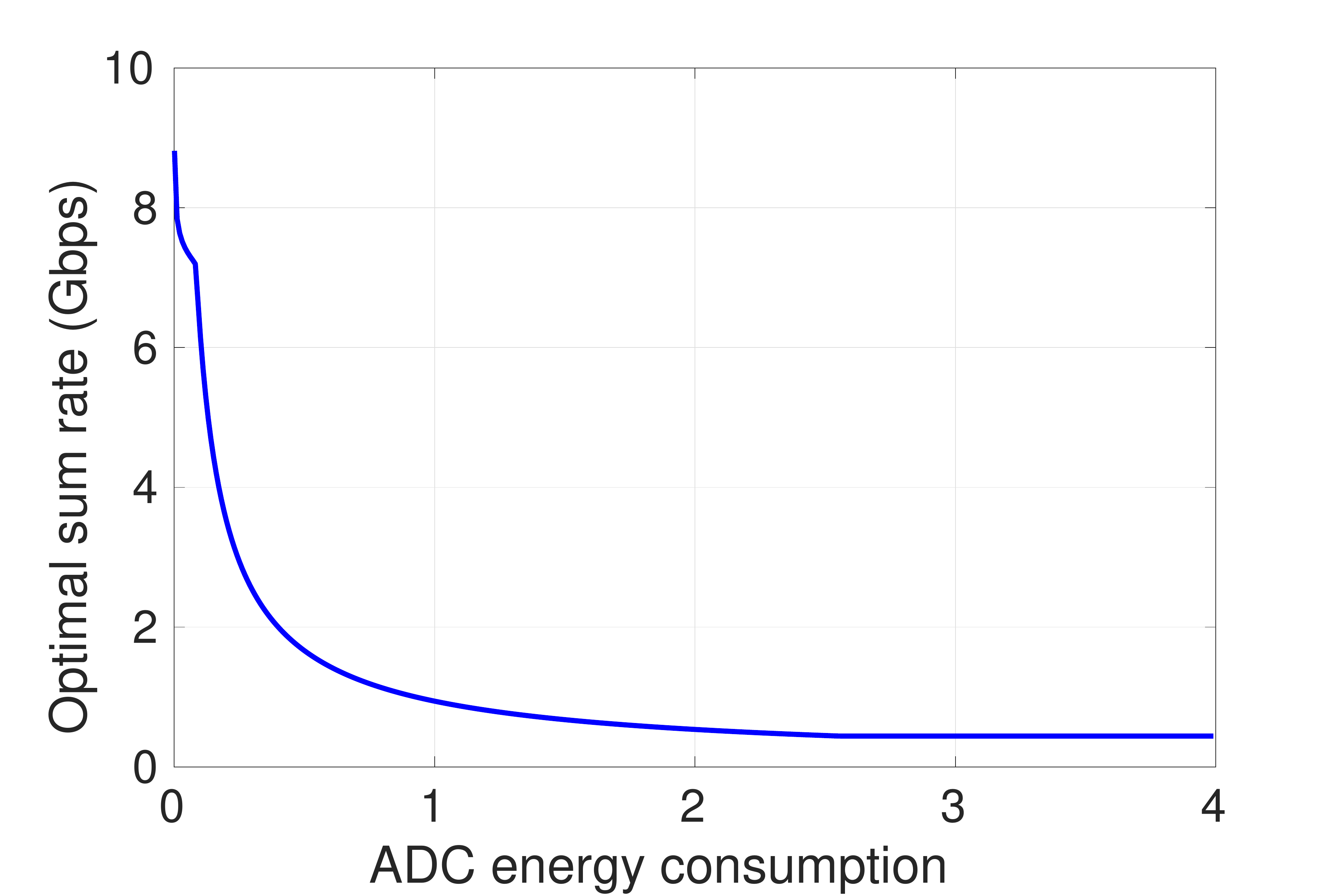}  
    }
  \end{center}
  \vspace{-0.15in}
  \caption{Optimal solution as a function of ADC energy consumption for $P_{\max} = 1000$ mW. In these figures, ADC energy consumption refers to the constant $a \times 10^{8}$. }
  \vspace{-0.2in}
  \label{HB_effect}
\end{figure}

\begin{figure}[t!]
  \begin{center}
    \subfigure[MmWave bandwidth vs. input power] {
      \label{fig:mmWave_bandwidth_power}
      \includegraphics[scale=.28, trim = 3cm 2.8cm 3cm 2.5cm, clip]{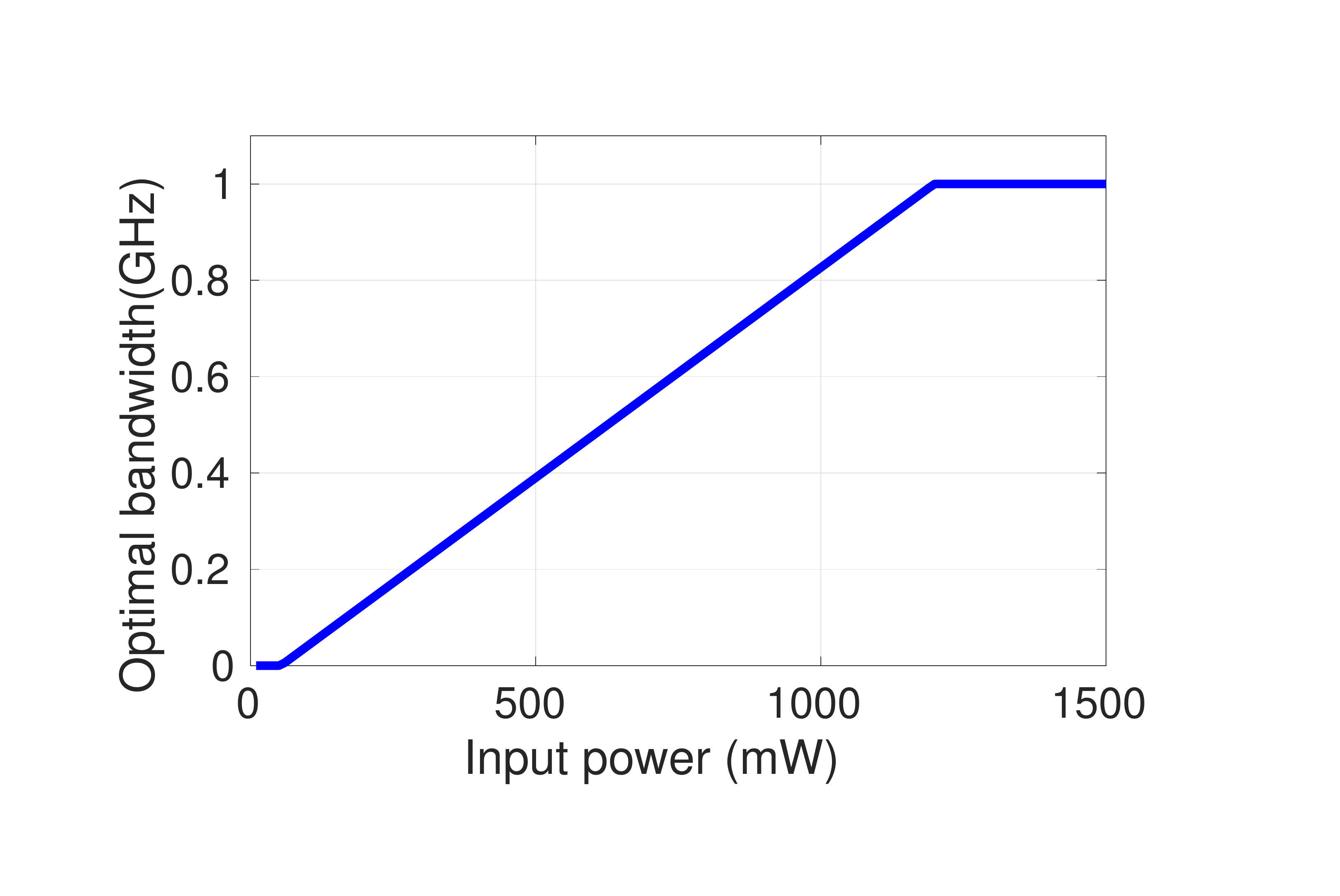} 
    } \hspace{.2cm}
    \subfigure[Sum rate vs. input power] 
    {
      \label{fig:mmWave_sum_rate}
      \includegraphics[scale=.28, trim = 3cm 2.3cm 3cm 2.5cm, clip]{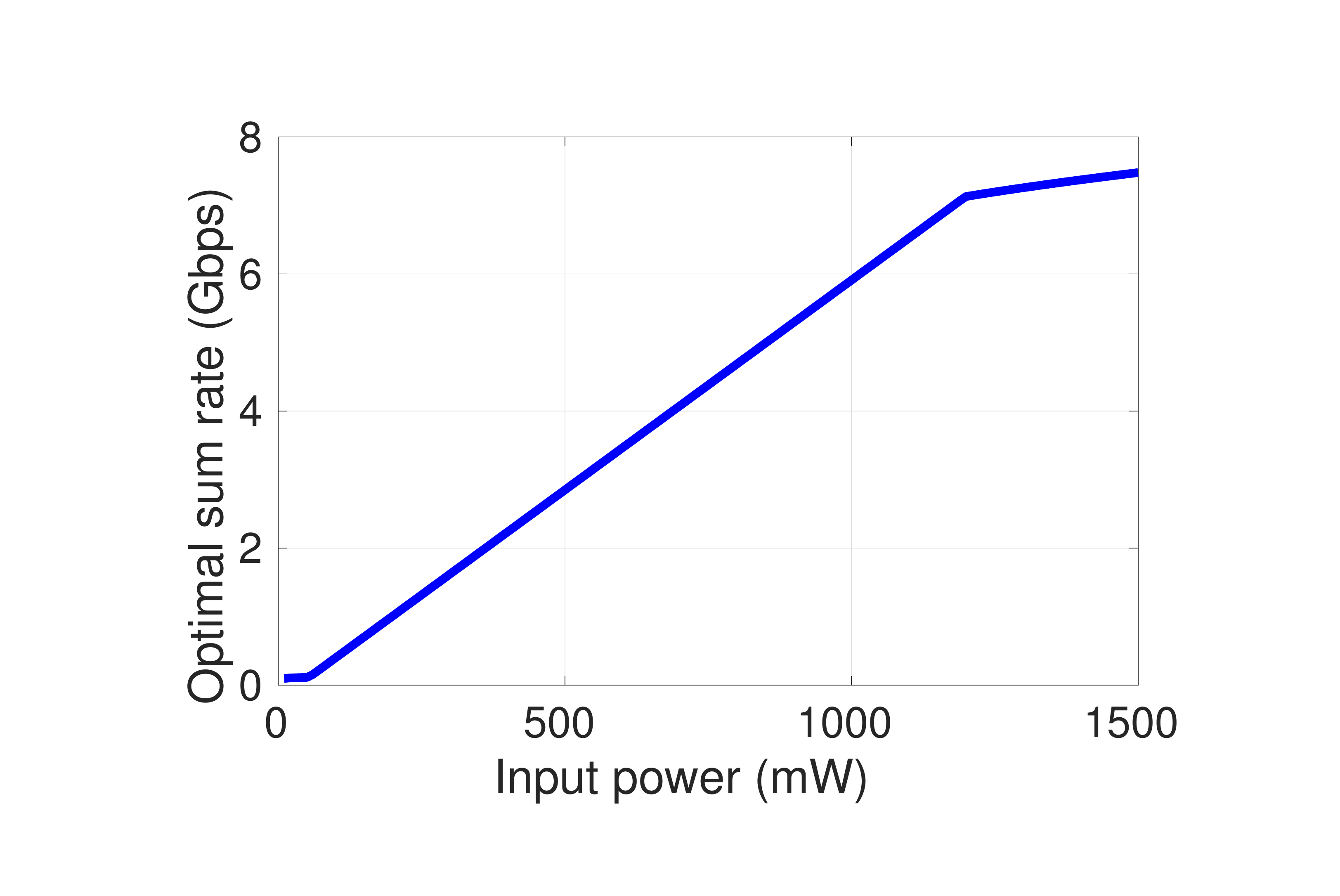}  
    }
  \end{center}
  \vspace{-0.15in}
  \caption{Optimal solution as a function of input power budget for $a=10^{-9}$}
  \vspace{-0.2in}
  \label{HB_effect}
\end{figure}

In order to investigate behavior of the optimal power and bandwidth allocation as a function of the physical characteristic of ADC (i.e., $a=c_{ox}2^{r_{ADC}}$), we note that 
 $W_\text{m}^*$ decreases as the power consumption by ADC component increases. It is straightforward  to show this by
taking the derivative of $W_\text{m}^*$ with respect to $a$ and see that
$
\frac{\partial W_\text{m}^*}{\partial a} < 0.
$
Therefore, the optimal bandwidth allocated to the mmWave interface decreases as the power consumption by ADC increases.  
Figure \ref{fig:mmWave_bandwidh_ADC} shows the optimal mmWave bandwidth allocation as a function of the scaling factor $a$, noting that a larger $a$ indicates that the ADC consumes more power for a fixed bandwidth. Moreover, Fig. \ref{fig:sum_rate_ADC} demonstrates the achievable sum rate as a function of $a$. From the results, we observe that the optimal bandwidth allocation and the achievable sum rate decreases as $a$ increases. In Fig. \ref{fig:mmWave_bandwidth_power} and \ref{fig:mmWave_sum_rate}, we investigate behavior of the optimal bandwidth allocation and achievable sum rate as a function of the power budget $P_{\max}$. From the results, we observe that the optimal allocated bandwidth to the mmWave interface and the achievable sum rate increases as the input power increases, as expected.

\textbf{(II) Full mmWave Bandwidth Allocation ($\boldsymbol{\mu2 > 0}$):} Similar to Case 1, if due to the channel conditions, the inequality $\mu_2 > \mu_1$ holds, 
% $$a\mu_0+\mu_2 > a\mu_0+\mu_1 \Rightarrow \mu_2>\mu_1\geq 0 \Rightarrow \mu_2>0,$$  
then from the complimentary slackness conditions, we conclude that the mmWave bandwidth should be fully allocated, i.e., $W_\text{m}^* = W_\text{m}^\text{max}$. In this case, the optimal solution is obtained as follows:  
\begin{equation}
P_\text{m}^* = \max\left(0, \frac{aW_\text{m}^\text{max}}{D}\right),
\end{equation} 
\begin{equation}
P_\text{sub-6}^* = \max\left(0, \frac{P_{\max}-EW_\text{m}^{\max}}{D+1}\right), 
\end{equation}
\begin{equation}
W_\text{sub-6}^* = \max\left(0, \frac{P_\text{max} - E W_\text{m}^{\max}}{n_r E}\right),
\end{equation}
where 
$$
D = \omega \left(\frac{\sum_{i=1}^n \log(\lambda_i^2)}{n} - \log\frac{n_t}{a} - 1 \right)\ \text{and} \ \ E=a+\frac{a}{D}.
$$
Similar to Case 1, we note that only one of the interfaces (i.e., mmWave) utilizes its full bandwidth, and the bandwidth allocated to the other interface (i.e., sub-6 GHz) decreases as the energy consumption by the ADC component increases.  

\textbf{(III) Special case of negligible ADC consumption in sub-6 GHz:} In the case that ADC power consumption in the sub-6 GHz interface is negligible compared with the mmWave interface, it is optimal to always allocate full bandwidth to the sub-6 GHz interface. In particular, the power constraint \eqref{power-constraint} is simplified to:
$$
P_{\text{sub-6}} \large{\mathds{1}}_{W_\text{sub-6} > 0} +P_{\text{m}} \large{\mathds{1}}_{W_\text{m} > 0}+ a W_{\text{m}}  \leq P_{\max}.
$$ 
In this case, the optimal solution always falls back to Case 1 discussed earlier where we have $\mu_1 > 0$, resulting in  allocating full bandwidth to the sub-6 GHz interface, as expected. Moreover, it should be noted if the power consumption by the mmWave ADC is not taken into account, then the optimal solution allocates full bandwidth to both interfaces, as it has been prevalent in the previous works.

\textbf{(IV) Low SNR regime:} In order to derive the optimal values in \eqref{eq:power-optimal}--\eqref{eq:BW-optimal}, we assumed that the system operates at high SNR regime. We can extend the previous results into the low SNR setting as well.  We apply the approximation $\log (1+x) \approx x \log_2 e$ for $x$ small, and obtain the following approximations for the sub-6 GHz and mmWave achievable rates at low SNR regime: 
$$
\mathcal{R}_\text{sub-6} \approx \sum_{i=1}^n \frac{P_\text{sub-6}}{n_t}\lambda_i^2 \log_2 e,
\quad \text{and} \quad 
\mathcal{R}_\text{m} \approx P_\text{m} \left|\vc{w}_r^H \mathbf{H}_{\text{m}} \vc{w}_t\right|^2 \log_2 e. 
$$
As a result, the KKT conditions are obtained as follows:
\begin{equation}
  \left\{
  \begin{array}{l l}
 n_r a \mu_0 + \mu_1 & =0 ,\\ \vspace{.1cm}
    
 \frac{\log_2 e}{n_t} \sum_{i=1}^{n}\lambda_i^2  & = \mu_0 - \mu_3, \\ \vspace{.1cm}  
  
a  \mu_0 + \mu_2 & =0, \vspace{.14cm} \\  
    
\left|\vc{w}_r^H \mathbf{H}_{\text{m}} \vc{w}_t\right|^2 \log_2 e& = \mu_0 - \mu_4.
  \end{array} \right.
  \label{kkt-low-SNR}
\end{equation}
Similar to the high SNR setting, depending on the channel conditions, we consider different scenarios under which only one of the interfaces becomes active. In the first scenario, we assume that the mmWave channel has a better channel condition than the sub-6 GHz channel. Therefore, from \eqref{kkt-low-SNR} we have:
$
\mu_0 - \mu_3 < \mu_0 - \mu_4   \Rightarrow 0 \leq  \mu_4 < \mu_3. 
$
However, from the complementary slackness condition, we have $P_\text{sub-6} \mu_3 = 0$. Therefore, the optimal allocated power to the sub-6 GHz interface should be zero since $0<\mu_3$. This is in agreement with intuition that when the system operates at low SNR regime, the input power is allocated to the interface that has a better channel condition. A similar argument holds when the sub-6 GHz channel has a better channel condition compared with the mmWave interface, which results in zero power allocation to the mmWave interface and transmission across the sub-6 GHz interface only.  

\subsection{Partial Channel State Information at the Transmitter}
\label{sec:partial-CSIT}
In MIMO systems, calculating the achievable rate boils down to finding the transmit covariance
matrices that includes finding the
optimum transmit directions and the optimum power
allocation across the transmit antenna array.  When the channel is changing over time due to fading, and perfect and instantaneous CSI is known both at the receiver and at the transmitter side, the optimal power allocation across the antenna elements extend to water-filling over both time and space (i.e., the eigenmodes). However, in most of the MIMO wireless communications scenarios, it is unrealistic
to assume that the transmitter side has the perfect knowledge
of the instantaneous CSI. In such cases, it might be more
realistic to assume that only the receiver side can perfectly
estimate the instantaneous CSI, while the transmitter side has
only a partial knowledge of the channel. 

Thus far, we studied the optimal power and bandwidth allocation assuming that the sub-6 GHz transmitter does not have channel state information. In this section, we extend the previous results to the case in which there is partial CSIT.
%In order to obtain the optimal physical layer resource allocation, we require to express the optimal sub-6 GHz covariance matrix $\mathbf{K}_\mathbf{xx}$ in terms of the allocated transmit power $P_\text{sub-6}$. In general, the optimal covariance matrix and the optimal power allocation across the transmit antennas is obtained by the water-filling method. This method, however, does not provide an explicit form of power allocation across the antennas i.e., it is expressed in terms of Lagrange multipliers.
In order to optimally allocate power across the sub-6 GHz and mmwave interfaces, we require that the instantaneous power constraint to be satisfied. Therefore, the optimal covariance matrix $\mathbf{K}_{\mathbf{xx}}$ should satisfy the power constraint at all times, independent of the channel state. Under this assumption, using the water-filling method and deriving the optimal covariance matrix becomes challenging. In order to resolve this issue, we obtain the optimal power allocation for the ``worst and best'' sub-6 GHz channel conditions. Specifically, we derive an upper and lower bound on the sub-6 GHz achievable rate and then obtain the optimal covariance matrix.

\textbf{Partial CSIT model:} Similar to the model presented in \cite{abdelaziz2017compound}, we assume that the sub-6 GHz transmitter has access to \emph{partial CSIT} such that the transmitter still does not know the \emph{exact realization of the channel}, but rather, it knows that the channel matrix belongs to a known compact  set  defined as follows: 
$$
\mathcal{S} = \left\{\mathbf{H}_\text{sub-6}: \mathbf{H}_\text{sub-6} = \bar{\mathbf{H}}_\text{sub-6} + \Delta\mathbf{H}_\text{sub-6}, \ \ |\Delta\mathbf{H}_\text{sub-6}|_2 \leq \epsilon,  \ \ \bar{\mathbf{H}}_\text{sub-6} = \lambda^{1/2} \mathbf{v}\mathbf{u}^H \right\},
$$
where $\mathbf{v} \in \mathcal{C}^{n_r\times 1}$ and $\mathbf{u} \in \mathcal{C}^{n_t\times 1}$. In this model, $\bar{\mathbf{H}}_\text{sub-6}$ is the mean channel information that corresponds to the LOS component. In general, LOS component causes the channel variations to be centered around a mean value with unit rank whose gain depends mainly on the distance between the transmitter and receiver, array configuration and respective array orientation. Hence, we assume that $\bar{\mathbf{H}}_\text{sub-6}$ is of unit rank.
%, and that the transmitter knows $\mathbf{u}$.  
On the other hand,  $\Delta\mathbf{H}_\text{sub-6}$ captures the NLOS components and represents the variations around the mean value such that $|\Delta\mathbf{H}_\text{sub-6}|_2 \leq \epsilon$. We further assume that the channel error matrix $
\Delta \mathbf{H}_\text{\text{sub-6}}$ has zero mean
and i.i.d. Gaussian entries, each with variance of $\sigma_e ^2$.
Therefore, we have the following mapping between the channel matrices: 
$$
\bar{\mathbf{H}}_\text{sub-6} \Longleftrightarrow \mathbf{H}_\text{sub-6}^{\text{LOS}} \ \ \text{and} \ \ \Delta\mathbf{H}_\text{sub-6} \Longleftrightarrow \mathbf{H}_\text{sub-6}^{\text{NLOS}}. 
$$

\textbf{(I) Lower Bound on the sub-6 GHz Achievable Rate:} Given that the transmitter has partial CSI according to the presented model, we obtain a lower bound on the achievable rate of the sub-6 GHz interface, and derive the corresponding optimal covariance matrix. In order to derive a lower bound on the sub-6 GHz achievable rate, we consider the \emph{worst sub-6 GHz channel condition} that renders a lower bound on the sub-6 GHz achievable rate. Then, we optimize the physical layer resource allocation for the worst sub-6 GHz channel. From \cite{abdelaziz2017compound}, we adopt the following result. 

\begin{theorem} 
Assuming the presented CSIT model, and for any nonnegative definite covariance matrix $\mathbf{K}_\mathbf{xx}$, if the achievable data rate of the sub-6 GHz interface is denoted by $\mathcal{R}_\text{sub-6}(\mathbf{H}_\text{sub-6}, \mathbf{K}_\mathbf{xx})$, then we have:
\begin{equation}
\mathcal{R}_\text{sub-6}(\mathbf{H}_\text{sub-6}, \mathbf{K}_\mathbf{xx}) \geq \mathcal{R}_\text{sub-6}({\mathbf{H}}^*_\text{sub-6}, \mathbf{K}_\mathbf{xx}),
\end{equation}
in which ${\mathbf{H}}^*_\text{sub-6}$ is the \textbf{worst sub-6 GHz channel} such that ${\mathbf{H}}_\text{sub-6}^{*H}{\mathbf{H}}^*_\text{sub-6} = [(\lambda^{1/2} - \epsilon)^+]^2\mathbf{uu}^H$ where $(x)^+ = \max(0,x)$.  
\label{theorem-lower-bound}
\end{theorem}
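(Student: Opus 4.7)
The plan is to establish the worst-case character of $\mathbf{H}^*_\text{sub-6}$ in two steps: first exhibit an explicit element of $\mathcal{S}$ that realizes the stated Gram structure, and then invoke a rank-one reduction together with Weyl's singular-value perturbation bound to show that no admissible channel attains a smaller rate.

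For the construction I would take $\Delta\mathbf{H}^\star := -\min(\lambda^{1/2},\epsilon)\,\mathbf{v}\mathbf{u}^H$, so $\|\Delta\mathbf{H}^\star\|_2 = \min(\lambda^{1/2},\epsilon)\leq\epsilon$ and therefore $\mathbf{H}^\star_\text{sub-6} := \bar{\mathbf{H}}_\text{sub-6}+\Delta\mathbf{H}^\star = (\lambda^{1/2}-\epsilon)^+\mathbf{v}\mathbf{u}^H$ lies in $\mathcal{S}$. A direct evaluation yields $\mathbf{H}^{\star H}_\text{sub-6}\mathbf{H}^\star_\text{sub-6} = [(\lambda^{1/2}-\epsilon)^+]^2\mathbf{u}\mathbf{u}^H$, matching the theorem's requirement. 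Using the Sylvester identity $\det(\mathbf{I}+AB)=\det(\mathbf{I}+BA)$, the rate at $\mathbf{H}^\star_\text{sub-6}$ collapses to the scalar
\[
\mathcal{R}_\text{sub-6}(\mathbf{H}^\star_\text{sub-6},\mathbf{K}_\mathbf{xx}) = W_\text{sub-6}\log\!\left(1 + \frac{[(\lambda^{1/2}-\epsilon)^+]^2\,\mathbf{u}^H\mathbf{K}_\mathbf{xx}\mathbf{u}}{W_\text{sub-6}}\right),
\]
which depends on $\mathbf{K}_\mathbf{xx}$ only through the quadratic form $\mathbf{u}^H\mathbf{K}_\mathbf{xx}\mathbf{u}$.

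The main step is the lower bound at arbitrary $\mathbf{H}_\text{sub-6}\in\mathcal{S}$. I would rewrite the rate in the equivalent form $W_\text{sub-6}\log\det(\mathbf{I} + \mathbf{K}_\mathbf{xx}^{1/2}\mathbf{H}^H_\text{sub-6}\mathbf{H}_\text{sub-6}\mathbf{K}_\mathbf{xx}^{1/2}/W_\text{sub-6})$ and apply the elementary bound $\log\det(\mathbf{I}+A)\geq\log(1+\lambda_\text{max}(A))$ for PSD $A$, so the claim reduces to showing that the principal effective gain $\lambda_\text{max}(\mathbf{K}_\mathbf{xx}^{1/2}\mathbf{H}^H_\text{sub-6}\mathbf{H}_\text{sub-6}\mathbf{K}_\mathbf{xx}^{1/2})$ dominates $[(\lambda^{1/2}-\epsilon)^+]^2\mathbf{u}^H\mathbf{K}_\mathbf{xx}\mathbf{u}$. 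Weyl's perturbation bound applied to $\mathbf{H}_\text{sub-6}=\bar{\mathbf{H}}_\text{sub-6}+\Delta\mathbf{H}_\text{sub-6}$ yields $\sigma_1(\mathbf{H}_\text{sub-6})\geq(\lambda^{1/2}-\epsilon)^+$, and the rank-one structure $\bar{\mathbf{H}}_\text{sub-6}=\lambda^{1/2}\mathbf{v}\mathbf{u}^H$ forces the principal right-singular direction of $\mathbf{H}_\text{sub-6}$ to remain (up to the perturbation) aligned with $\mathbf{u}$. Combined with the observation that any extra singular directions of $\mathbf{H}_\text{sub-6}$ contribute only nonnegatively to the log-determinant, this delivers the required scalar bound.

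The hard part is ensuring tightness of the final comparison: the naive Löwner inequality $\mathbf{H}^H_\text{sub-6}\mathbf{H}_\text{sub-6}\succeq\mathbf{H}^{\star H}_\text{sub-6}\mathbf{H}^\star_\text{sub-6}$ fails along directions transverse to $\mathbf{u}$, and a plain reverse-triangle estimate along the test direction $\mathbf{K}_\mathbf{xx}^{1/2}\mathbf{u}/\|\mathbf{K}_\mathbf{xx}^{1/2}\mathbf{u}\|_2$ loses a factor unless $\mathbf{u}$ is already a $\mathbf{K}_\mathbf{xx}$-eigenvector. Resolving this cleanly requires the minimax argument over the compound uncertainty ball developed in \cite{abdelaziz2017compound}, which exploits the convexity of $\{\mathbf{H}^H_\text{sub-6}\mathbf{H}_\text{sub-6}:\mathbf{H}_\text{sub-6}\in\mathcal{S}\}$ together with the concavity of $\log\det$ to identify the extremal point; this is precisely the result being adopted here.
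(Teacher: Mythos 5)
Your construction of $\mathbf{H}^\star_\text{sub-6}=(\lambda^{1/2}-\epsilon)^+\mathbf{v}\mathbf{u}^H$ and the evaluation of its rate via Sylvester's identity are fine, and your reduction of the claim to the scalar inequality $\lambda_{\max}(\mathbf{K}_\mathbf{xx}^{1/2}\mathbf{H}^H_\text{sub-6}\mathbf{H}_\text{sub-6}\mathbf{K}_\mathbf{xx}^{1/2})\geq[(\lambda^{1/2}-\epsilon)^+]^2\,\mathbf{u}^H\mathbf{K}_\mathbf{xx}\mathbf{u}$ matches the structure of the paper's argument (the paper keeps all eigenvalue terms and discards those with index $i>1$ rather than invoking $\log\det(\mathbf{I}+\mathbf{A})\geq\log(1+\lambda_{\max}(\mathbf{A}))$ up front, but that difference is cosmetic). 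The problem is that you never actually prove that scalar inequality. Weyl's bound $\sigma_1(\mathbf{H}_\text{sub-6})\geq(\lambda^{1/2}-\epsilon)^+$ controls the unweighted top singular value, but the quantity you need involves the product with $\mathbf{K}_\mathbf{xx}^{1/2}$, and the naive chaining $\sigma_1(\mathbf{H}_\text{sub-6}\mathbf{K}_\mathbf{xx}^{1/2})\geq\sigma_1(\mathbf{H}_\text{sub-6})\,\sigma_{\min}(\mathbf{K}_\mathbf{xx}^{1/2})$ is vacuous precisely in the case of interest, since the optimal $\mathbf{K}_\mathbf{xx}$ for the worst channel is unit rank and hence singular. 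Your fallback --- that the principal right-singular direction of $\mathbf{H}_\text{sub-6}$ stays ``aligned with $\mathbf{u}$ up to the perturbation'' --- does not hold in any quantitative form that rescues the argument: singular-vector stability degrades as the perturbation approaches the spectral gap, and here $\epsilon$ may be comparable to $\lambda^{1/2}$. You correctly flag this as the hard part, but then defer it entirely to a minimax/convexity argument in \cite{abdelaziz2017compound} without supplying it, so the central inequality is asserted rather than proved.

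The paper closes exactly this gap with one tool you did not identify: the product-form singular-value perturbation inequality $\lambda_i\left((\mathbf{A}+\mathbf{B})\mathbf{C}\right)\geq\left(\lambda_i(\mathbf{A})-\lambda_1(\mathbf{B})\right)^+\lambda_i(\mathbf{C})$ from \cite{schaefer2015secrecy}. Applied with $\mathbf{A}=\bar{\mathbf{H}}_\text{sub-6}$, $\mathbf{B}=\Delta\mathbf{H}_\text{sub-6}$, $\mathbf{C}=\mathbf{K}_\mathbf{xx}^{1/2}$, it handles the weighting by the covariance directly (Weyl is the special case $\mathbf{C}=\mathbf{I}$), and the unit-rank structure $\lambda_i(\bar{\mathbf{H}}_\text{sub-6})=0$ for $i>1$ eliminates all but the leading term, giving $\lambda_1(\mathbf{H}_\text{sub-6}\mathbf{K}_\mathbf{xx}^{1/2})^2\geq[(\lambda^{1/2}-\epsilon)^+]^2\lambda_1(\mathbf{K}_\mathbf{xx})\geq[(\lambda^{1/2}-\epsilon)^+]^2\,\mathbf{u}^H\mathbf{K}_\mathbf{xx}\mathbf{u}$. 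With that lemma in hand your outline goes through; without it, the proof is incomplete.
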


\begin{proof}
\small
\begin{align}
\mathcal{R}_\text{sub-6}(\mathbf{H}_\text{sub-6}, \mathbf{K}_\mathbf{xx}) =  W_{\text{sub-6}} \log \det \left( \mathbf{I}+ \frac{\mathbf{H}_{\text{sub-6}}^H \mathbf{K}_\mathbf{xx} \mathbf{H}_{\text{sub-6}}}{W_{\text{sub-6}}}  \right) \stackrel{(a)}{=} \sum_{i=1}^{n_t} W_\text{sub-6} \log\left(1+\frac{1}{W_\text{sub-6}}\sigma_i\left(\mathbf{H}_\text{sub-6}^H \mathbf{K}_\mathbf{xx} \mathbf{H}_\text{sub-6}\right)\right) \nonumber &\\ \stackrel{(b)}{=} \sum_{i=1}^{n_t} W_\text{sub-6} \log\left(1+\frac{1}{W_\text{sub-6}}\lambda_i\left((\bar{\mathbf{H}}_\text{sub-6}+\Delta\mathbf{H}_\text{sub-6})\mathbf{K}_\mathbf{xx}^{1/2}\right)^2\right)  \stackrel{(c)}{\geq} W_\text{sub-6} \log\left(1+\frac{1}{W_\text{sub-6}}[(\lambda^{1/2}-\epsilon)^+]^2\lambda_1(\mathbf{K}_\mathbf{xx}) \right) &\nonumber\\ = \mathcal{R}_\text{sub-6}({\mathbf{H}}^*_\text{sub-6}, \mathbf{K}_\mathbf{xx})&,
\label{eq:partial-csi}
\end{align}
\normalsize
where $\sigma_i (\mathbf{X})$ and $\lambda_i (\mathbf{X})$ denote the $i$-th eigenvalue and singular value of matrix $\mathbf{X}$, respectively. In this case, (a) follows from the determinant properties that the determinant of a matrix is equal to the product of its eigenvalues. Moreover, (b) follows from the fact that:
$$\sigma_i\left(\mathbf{H}_\text{sub-6}^H\mathbf{K}_\mathbf{xx} \mathbf{H}_\text{sub-6}\right)~=~\lambda_i\left(\left(\bar{\mathbf{H}}_\text{sub-6}+\Delta\mathbf{H}_\text{sub-6}\right)\mathbf{K}_\mathbf{xx}^{1/2}\right)^2.
$$ 
Finally,  (c) is concluded  from the following singular value inequality \cite{schaefer2015secrecy}, and the fact that $\lambda_i(\bar{\mathbf{H}}_\text{sub-6}) = 0 \ \forall i > 1$.

\begin{lemma}[Singular value inequality]
Let $\mathbf{A}, \mathbf{B}$ and $\mathbf{C}$ be $n \times m$ and $m \times m$ matrices, and $\lambda_i (\mathbf{X})$ denotes the $i$-th singular value of matrix $\mathbf{X}$ such that $\{\lambda_i\}$ is in decreasing order. Then,
\begin{equation}
\lambda_i \left( (\mathbf{A}+\mathbf{B})\mathbf{C} \right) \geq \left(\lambda_i(\mathbf{A}) - \lambda_1 (\mathbf{B}) \right)^+ \lambda_i(\mathbf{C})
\end{equation} 
\end{lemma}
Proof is provided in \cite{schaefer2015secrecy}. 
\end{proof}

\emph{Theorem \ref{theorem-lower-bound} implies that the optimal input covariance matrix $\mathbf{K}_{\mathbf{xx}}$ for the worst sub-6 GHz channel has to be of unit rank.} That is because the matrix ${\mathbf{H}}^{*H}{\mathbf{H}}^*$ is shown to be of unit rank, and thus it has only one eigenvalue equals to $P_\text{sub-6}$. Its corresponding eigenvector $\mathbf{q}$ with $||\mathbf{q} ||=1$ satisfies the sub-6 GHz interface transmit power constraint. Therefore, the optimal covariance matrix $\mathbf{K}_\mathbf{xx}$ is written as follows:
\begin{equation}
\mathbf{K}_\mathbf{xx} = P_\text{sub-6} \mathbf{q q}^H \ \ \text{s.t.} \ \ ||\mathbf{q} ||=1,
\end{equation}
where the optimal $\mathbf{q}$ can be obtained by the null steering solution \cite{friedlander1989performance}.

\emph{Compared with \eqref{eq:optimal-covariance-no-csi}, we note that there is a power gain of factor $n_t$ when the transmitter can track the mean channel information.} Without channel knowledge at the transmitter, the transmit energy is spread out equally across all directions in $\mathcal{C}^{n_t}$. With transmitter knowledge of the mean channel information (i.e., LOS direction), the energy can now be focused on only one direction. However, it should be noted that the power gain accounts for the lower bound on the achievable rate. From the result on the optimal sub-6 GHz covariance matrix for the worst sub-6 GHz channel, we can apply the same method as in the scenario with no CSIT, and obtain the optimal power and bandwidth allocation across the interfaces. 
%The general form of the optimal solution will be similar. 

\textbf{ (II) Upper Bound on the sub-6 GHz Achievable Rate:} In this section, we derive an upper bound on the sub-6 GHz achievable rate and express the optimal covariance matrix that achieves the upper bound. Similar to the previous section, the physical layer resources can then be optimally allocated across the  sub-6 GHz and mmWave interfaces. We assume that the model of partial CSIT is valid such that the transmitter CSI includes a known component $\bar{\mathbf{H}}_\text{sub-6}$ that corresponds to the LOS component, and an error term $\Delta\mathbf{H}_\text{sub-6}$ that corresponds to the NLOS component.

For the purpose of our optimal resource allocation across the sub-6 GHz and mmWave interfaces, we require to obtain the optimal covariance matrix that is expressed in terms of total sub-6 GHz transmit power $P_\text{sub-6}$. In this case, we can show that:
$$
\mathcal{R}_\text{sub-6}(\mathbf{H}_\text{sub-6}, \mathbf{K}_\mathbf{xx}) \leq W_{\text{sub-6}} {\log \det \left( \mathbf{I}+ \frac{1}{W_{\text{sub-6}}} \mathbf{K}_\mathbf{xx}(\bar{\mathbf{H}}_{\text{sub-6}}^H \bar{\mathbf{H}}_{\text{sub-6}} + \sigma_e^2 \mathbf{I})\right)},
$$
As a result, the transmitter observes the equivalent channel covariance matrix $\boldsymbol{\Sigma} := \bar{\mathbf{H}}_{\text{sub-6}}^H \bar{\mathbf{H}}_{\text{sub-6}} + \sigma_e^2 \mathbf{I}$, and thus eigenvectors of the optimal covariance matrix $\mathbf{K}_\mathbf{xx}$ are equal to the eigenvectors of matrix $\boldsymbol{\Sigma}$. Given the knowledge of the channel covariance matrix, the authors in \cite{soysal2007optimum} derived the optimal power allocation across the transmit antennas such that the power $p_i$ is allocated to the $i$-th transmit antenna:
\begin{equation}
p_i = \frac{p_i E_i (\mathbf{p})}{\sum_{j=1}^{n_t} p_j E_j (\mathbf{p})} P_\text{sub-6}, 
\label{eq:optimal-power-RF}
\end{equation}
where $\mathbf{p} = (p_1, ..., p_{n_t})$ is the vector of optimal power allocated across the $n_t$ sub-6 GHz transmit antennas, and $E_i(\mathbf{p})$ is expressed in terms of eigenvalues of the channel covariance matrix and the power allocation vector $\mathbf{p}$, i.e.,:
$$
E_i(\mathbf{p}) = E \left[ \frac{\sigma_i({\boldsymbol{\Sigma}}) \mathbf{z}_i ^ H \mathbf{A}_i ^ {-1} \mathbf{z}_i}{1+p_i \sigma_i({\boldsymbol{\Sigma}}) \mathbf{z}_i ^ H \mathbf{A}_i ^ {-1} \mathbf{z}_i}\right],
$$
where $\mathbf{A}_i = \mathbf{I} + \sum_{j=1}^{n_t} p_j \sigma_j({\boldsymbol{\Sigma}}) \mathbf{z}_j \mathbf{z}_j^H - p_i \sigma_i({\boldsymbol{\Sigma}}) \mathbf{z}_i \mathbf{z}_i^H$, and $\mathbf{z}_i$ is the $i$-th column of $\mathbf{Z}$ that is used to convert the channel covariance matrix $\boldsymbol{\Sigma}$ to the channel matrix $\bar{\mathbf{H}}_\text{sub-6}$ by $\bar{\mathbf{H}}_\text{sub-6} = \mathbf{Z} \boldsymbol{\Sigma}^{1/2}$. Entries of $\mathbf{Z}$ are i.i.d., zero-mean, unit-variance complex Gaussian random variables \cite{soysal2007optimum}. Finally, from \eqref{eq:optimal-power-RF}, the optimal power allocation to the sub-6 GHz transmit antennas can be obtained by a fixed point algorithm \cite{soysal2007optimum}. Therefore, the optimal covariance matrix $\mathbf{K}_\mathbf{xx}$ can be expressed in terms of the total sub-6 GHz transmit power $P_\text{sub-6}$, and similar to the previous section, we derive the optimal power and bandwidth allocation across the sub-6 GHz and mmWave interfaces.

%\begin{proposition}
%For the channel matrix $\mathbf{H}_b$, the optimal covariance matrix is given in the following form:
%$$
%K_\mathbf{xx}  = P_R q_* q_*^{H}
%$$
%\end{proposition}

%\begin{proof}
%To be added. 
%\end{proof}
%The optimal beamformer given in the theorem is known as null steering (NS) beamforming \cite{friedlander1989performance}. 

%$\boldsymbol{\Psi} = \boldsymbol{a}(\theta) \boldsymbol{a}^*(\phi)$ in which the   $\boldsymbol{a}(\theta)$ and $\boldsymbol{a}(\phi)$ are the antenna array spatial signatures  at receiver and transmitter, respectively, $\theta$ and $\phi$ are the AoA and AoD of the transmitted signal
%respectively.  Meanwhile, represents the channel coefficients
%matrix for the NLOS signal component

%channel  . In particular: 
%$$
%\mathbf{H}_R = \mathbf{H} + \Delta\mathbf{H}, 
%$$

%\begin{itemize}
%%\item Given the above assumption on channel matrix, obtain the worst Gram matrix in terms of capacity.
%%\item Mention that the rate is always upper bounded by the rate for this channel matrix
%%\item Show that for this channel matrix, what is the optimal input covariance matrix
%%\item The optimal covariance matrix is expressed in terms of input power. 
%\item Replace the optimal covariance matrix in the optimization and show it follows the same logic as when there is no CSI.    
%\end{itemize}

\section{Energy Efficiency Maximization}
\label{sec:problem2-solution}
Problem 2 formulated in \eqref{eq:problem2} through \eqref{eq:problem2-power-constraint} is aimed to optimize the normalized sum rate in terms of bit/sec/watt or bit/joule.  Our observations on the solution of this problem for the SISO case indicate that (i) if the transmission power is much lower than component consumption in any interface, the objective function is decreasing in bandwidth for that interface. This suggests that the optimal solution is not to allocate any bandwidth for that interface; (ii) on the flip side, if the transmission power is much higher than the component consumption, then the objective function is increasing in bandwidth for that platform. This suggests that the optimal solution is to allocate full bandwidth for that platform; (iii) otherwise, the objective function is non-monotonic, increasing up to a certain bandwidth at which point it starts to decrease. Indeed, in Fig.~\ref{fig:shannon_with_components}, we observe such a case for the mmWave interface.
% In order to investigate the optimal solution of this problem, similar to Problem 1, we consider the case that the CSI is not available at the transmitter.

 The problem of normalized sum rate maximization can be viewed as an instance of energy efficiency (EE) problem defined as the ratio between the amount of data transmitted and the corresponding incurred cost in terms of power. In order to solve this problem, we use fractional programming that solves the problem in the following general form: 
\begin{align}
\max_{\mathbf{x}}  \ \frac{f(\mathbf{x})}{g(\mathbf{x})}; \\
\text{s.t.} \ \ \mathbf{x} \in \mathcal{X},
\end{align}
with $f: \mathcal{C} \subseteq \mathds{R}^n \rightarrow \mathds{R}$,  $g: \mathcal{C} \subseteq \mathds{R}^n \rightarrow \mathds{R}_+$, and $\mathcal{X} \subseteq \mathcal{C} \subseteq \mathds{R}^n$. In general, obtaining optimal solution for the fractional problems is challenging since the objective function is not concave, and thus the
strong results of convex optimization theory (i.e., KKT conditions) do not apply. However, under some assumptions on $f$ and $g$, the objective falls into the class of generalized concave functions, which makes it feasible to obtain the global solution. Considering the formulation in \eqref{eq:problem2}, since logarithm of identity matrix plus an Hermitian positive-semidefinite matrix is a matrix-concave function, the numerator is concave. Denumerator is a linear function of the sub-6 GHz and mmWave power and bandwidth variables. Therefore, the objective function is pseudo-concave. As a result, each stationary point of the objective function is a global maximizer, and KKT conditions are necessary and sufficient for optimality \cite{zappone2015energy}.

In order to solve pseudo-concave problems, some solution approaches have been proposed. Dinkelbach’s algorithm proposed in \cite{dinkelbach1967nonlinear,jagannathan1966some} is a parametric algorithm that instead of solving the original problem, solves a sequence of easier problems that converge to the global solution. Specifically, let us assume that the set of feasible solutions of Problem 2 is denoted by $\mathcal{S}$. Then, the main result of the  Dinkelbach’s algorithm is the relation between the original problem and the following function:
\begin{equation}
F(\beta) = \max_{\mathbf{x} \in \mathcal{X}} \left\{f(\mathbf{x}) - \beta g(\mathbf{x}) \right\}.
\end{equation}
In this case, $F$ exists and is continuous provided that $f$ and $g$ are continuous and $\mathcal{X}$ is compact. In addition, it is possible to show that $F$ is convex on $\mathds{R}$, and is strictly monotone decreasing on $\mathds{R}$ \cite{zappone2015energy}. The connection between $F(\beta)$ and the original problem is as follows. Consider $\mathbf{x}^* \in \mathcal{X}$ and $\beta^* = \frac{f(\mathbf{x}^*)}{g(\mathbf{x}^*)}$, then $\mathbf{x}^*$ is the optimal solution for the original problem if and only if:
\begin{equation}
\mathbf{x}^* = \argmax_{\mathbf{x} \in \mathcal{X}} \{f(\mathbf{x}) - \beta^* g(\mathbf{x}) \}.
\label{eq:dinkelbach} 
\end{equation}
Therefore, it can be shown that solving the fractional programming is equivalent to finding the roots of $F(\beta)$. Thus, Dinkelbach’s algorithm achieves this goal as provided in Algorithm 1.  
Optimality and convergence results of Dinkelbach's algorithm are shown in \cite{zappone2015energy}. 

\begin{algorithm}[t]
\caption{Dinkelbach’s Algorithm}
\begin{varwidth}{\dimexpr\linewidth-2\fboxsep-2\fboxrule\relax}
\begin{algorithmic}[1]
\State $\delta > 0$, $n=0$, $\beta_n =0$
\While{$F(\beta_n) > \delta$} 
\State $\mathbf{x}^*_n = \argmax_{\mathbf{x} \in \mathcal{X}} \{f(\mathbf{x}) - \beta_n g(\mathbf{x}) \}$
\State $F(\beta_n) = \max_{\mathbf{x} \in \mathcal{X}} \left\{f(\mathbf{x^*}) - \beta_n g(\mathbf{x^*}) \right\}.$
\State $\beta_{n+1} = \frac{f(\mathbf{x}^*_n)}{g(\mathbf{x}_n^*)}$
\State $n=n+1$
\EndWhile
\end{algorithmic}
\end{varwidth}%
\end{algorithm}

By considering the energy efficiency problem as a pseudo-concave fractional problem and applying Dinkelbach’s method, we in fact reach to the same formulation as Problem 1. In particular, Eq. \eqref{eq:dinkelbach} resembles the same equation as the Lagrange equation in \eqref{lagrange}. Therefore, the solution space will be similar, while Dinkelbach's algorithm iteratively solves the Lagrange equation instead of finding the solution in one-shot. Similar to Problem 1, various CSIT scenarios can be considered in order to first simplify the objective value in EE problem.

\section{Numerical Results}
In this section, we numerically investigate the performance of our proposed resource allocation scheme. For the mmWave technology, the carrier frequency is $30$ GHz and the \emph{total available bandwidth} is $1$ GHz. The number of transmit and receive antennas are $64$ and $16$, respectively. For the sub-6 GHz interface, the carrier frequency is $3$ GHz, the \emph{total available bandwidth} is $1$ MHz, and the number of antennas is the same as in mmWave. MmWave and sub-6 GHz channel matrices are extracted from the experimental data in \cite{mezzavilla20155g}. In the simulation results, we obtain the performance as a function of the total available power and the parameter $a = c_\text{ox} 2^{r_{\text{ADC}}}$. The former dictates the power constraint, while the latter   determines the power consumption of ADC components. 

In the first example, we consider an extreme scenario in which the ADC power consumption is very high. For this purpose, we set the scaling factor of ADC power consumption to be $a=10^{-7}$. Table \ref{tab:sum-rate} demonstrates the performance of our proposed optimal solution compared with the full bandwidth allocation. From the results, we observe that in the case of utilizing the whole available spectrum, the power consumption by ADC components exceeds the total available power. Thus, the transmit power becomes zero, and the achievable rate drops to zero as well. On the other hand, if only about $4$ MHz of the mmWave bandwidth is utilized, then the total sum rate of $2.55$ Kbps is achievable.   

\begin{table}[h]\centering
\renewcommand{\arraystretch}{1.3}
\ra{1.3}
\caption{Achievable sum rate by: (i) Full sub-6 GHz and mmWave bandwidth allocation, and (ii) Optimal allocation.}
\begin{threeparttable}
\begin{tabular}{@{}p{4cm}cc@{}}\toprule
\textbf{Resource} &  \textbf{Full Bandwidth Allocation} & \textbf{Optimal Bandwidth Allocation} \\ \toprule
Sub-6 GHz Bandwidth     &   $1$ MHz    & $1$ MHz \\ \midrule
mmWave Bandwidth   & $1$ GHz  &  $4.0597$ MHz \\ \midrule
Sub-6 GHz \emph{Transmission} Power   & $0$\tnote{1}  &  
$394$ mW \\ \midrule
mmWave \emph{Transmission} Power   & $0$\tnote{1}  &  $100$ mW \\ \midrule
\textbf{Achievable Sum Rate}   & \textbf{0}  &  \textbf{$\boldsymbol{2.55\times 10^3}$} \\ 
 \bottomrule
\end{tabular}
 \begin{tablenotes}
       \item[1] Due to high power consumption of the mmWave component, \emph{transmit power} becomes zero. 
     \end{tablenotes}
 \end{threeparttable}
  \label{tab:sum-rate}
\end{table}

%Under the assumption of full bandwidth allocation to the mmWave interface, it is trivial that the total available power is consumed by the ADC components, and thus no power is left for transmission. On the other hand, by reducing the mmWave bandwidth, we can achieve a sum rate of $2.55\times 10^3$ across both interfaces.  

%\subsection{Optimal Performance with ADC configuration}
%
%
%\subsection{Optimal Performance with Maximum Power Configuration}

\begin{figure}[h]
\centering
\includegraphics[scale=.3]{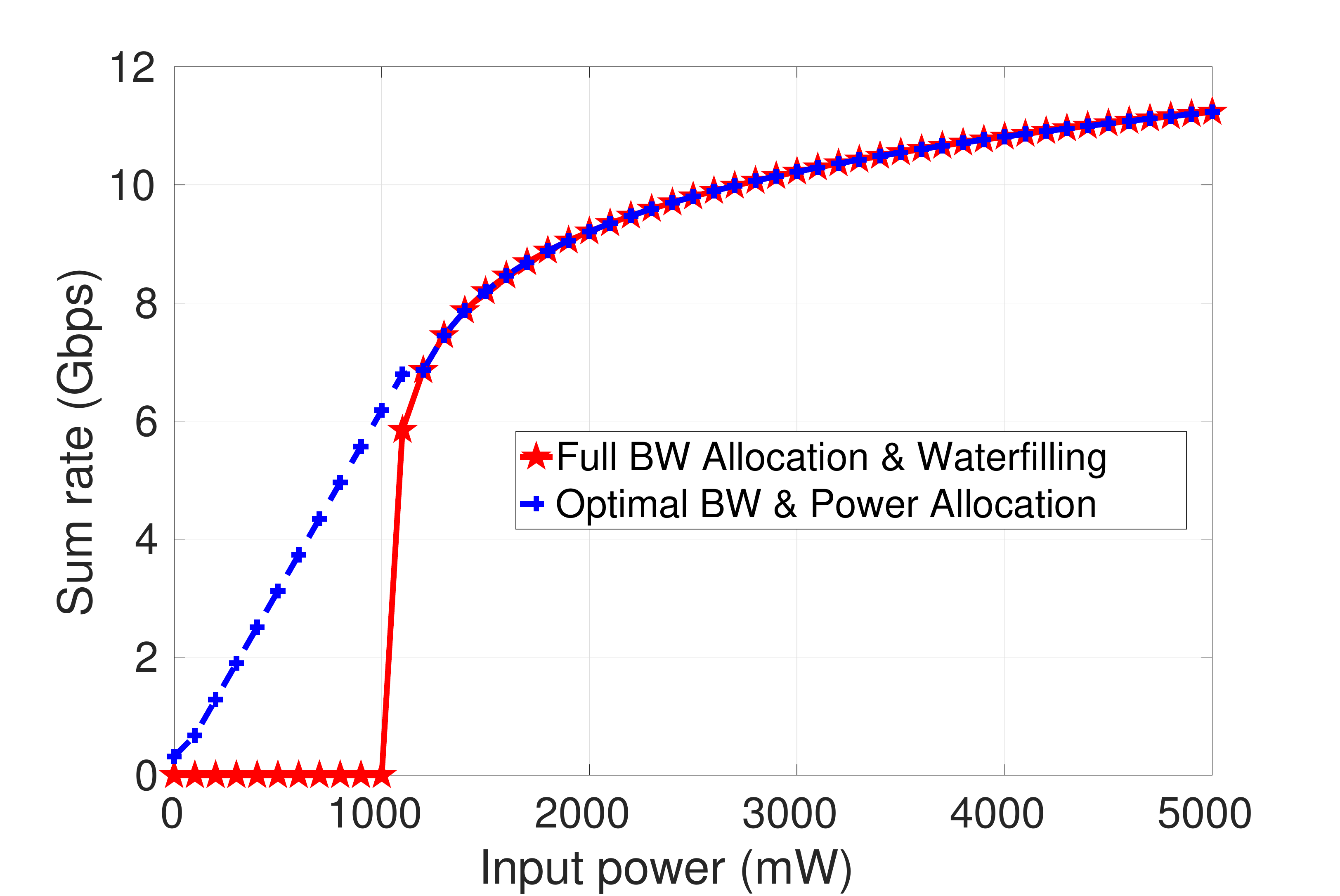}
\caption{Sum rate comparison between the optimal scheme and the full bandwidth allocation paired with the waterfilling power allocation across the sub-6 GHz and mmWave interfaces.}
\label{fig:optimal_sum_rate_vs_power}
\end{figure}

In the second scenario, we investigate the optimal solution of Problem 1 with no CSIT and as a function of available power (i.e., input power) and the power consumption of ADC components.  The results are shown in Fig. \ref{fig:optimal_sum_rate_vs_power} and Fig. \ref{fig:optimal_sum_rate_vs_ADC}, respectively. \emph{From the results, we observe that under low power scenario, it is optimal to partially use the available bandwidth to achieve a higher sum rate compared with the full bandwidth utilization. Moreover, when the ADC energy consumption increases, it is more energy efficient to partially use the available bandwidth.}

\begin{figure}
\centering
\includegraphics[scale=.3]{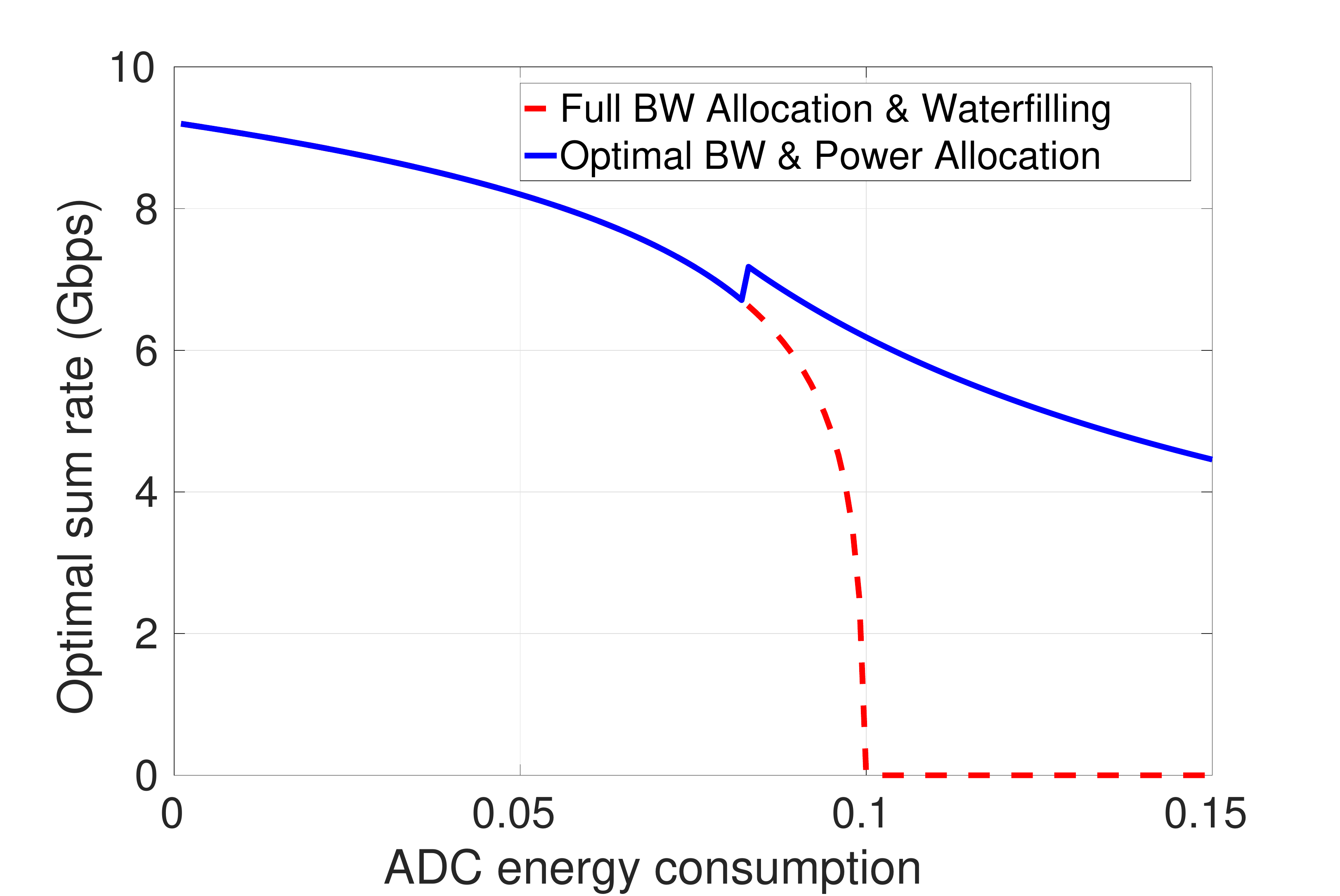}
\caption{Sum rate comparison between the optimal scheme and the full bandwidth allocation paired with the waterfilling power allocation across the sub-6 GHz and mmWave interfaces. In this figure, ADC energy consumption refers to the constant $a\times 10^8$. }
\label{fig:optimal_sum_rate_vs_ADC}
\end{figure}

Figure \ref{fig:problem-2} numerically demonstrates the optimal solution of energy efficient resource allocation. From the results, we observe that after some threshold on the mmWave bandwidth, the energy efficiency in terms of bits/sec/watt or bits/joule decreases for the mmWave interface. However, under our optimal resource allocation scheme, the increasing trend of the energy efficiency as a function of bandwidth is preserved due to partially utilizing the bandwidth.

\begin{figure}
\centering
\includegraphics[scale=.3]{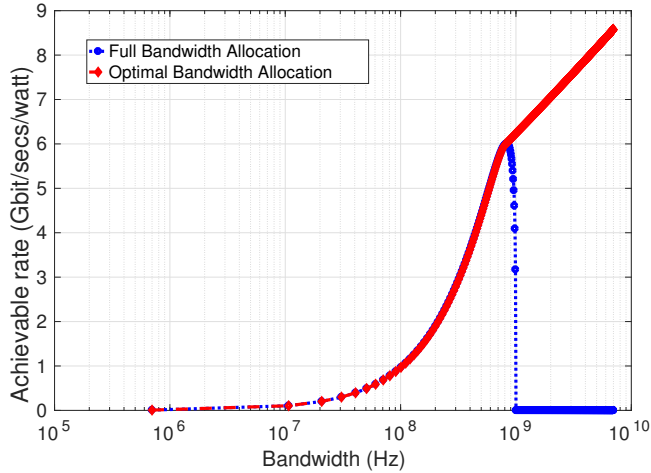}
\caption{Optimal solution of Problem 2 vs. full bandwidth allocation}
\label{fig:problem-2}
\end{figure}

\section{Conclusion}
\label{conclusion}
In this paper, we considered an integrated sub-6 GHz/mmWave architecture and proposed a joint power and bandwidth allocation framework in order to maximize the achieving sum rate and energy efficiency in the sub-6 GHz/mmWave system. In order to maximize the achievable sum rate under the transmitter and receiver power constraints, our formulation explicitly takes into account the energy consumption in integrated-circuit components. In addition, we investigated the optimal solution under various conditions such as when there is no channel state information at the transmitter or when there is only partial information available. From our optimal results, \emph{we observe that despite the availability of huge bandwidths at the mmWave interface, under some circumstances (e.g., low input power or ADC components with high energy consumption), it is more efficient to partially utilize the mmWave bandwidth.} We emphasize that physical layer resource allocation is of great importance in mmWave systems since the energy consumption by components increases with bandwidth, and thus they can incur large overhead in terms of power consumption. Hence, the power and bandwidth should be optimally allocated in order to avoid the heavy burden of components power consumption.

\bibliographystyle{ieeetr}
\bibliography{../References}{}

\end{document}